\documentclass[a4paper,onecolumn]{IEEEtran}
\IEEEoverridecommandlockouts

\usepackage{standalone}
\usepackage{amsmath,amssymb,lmodern,bm}
\usepackage[utf8]{inputenc}
\usepackage[T1]{fontenc}
\usepackage{adjustbox}
\usepackage{float}
\usepackage{pgfpages,graphicx,epstopdf}
\usepackage{tikz,pgfplots,fancyhdr,tcolorbox}
\usepackage{hyperref,mathtools,stackrel,array}
\usepackage{adjustbox}
\usepackage{pgfplots}
\usepackage{grffile}
\usepackage{tabularx}
\usepackage{booktabs}
\usepackage{totcount}
\usepackage{braket}
\usepackage{booktabs}
\usepackage{dsfont}
\usepackage{xcolor}
\usepackage[caption=false, font=footnotesize]{subfig}
\usetikzlibrary{arrows.meta, positioning}
\usetikzlibrary{external}
\allowdisplaybreaks
\def\BibTeX{{\rm B\kern-.05em{\sc i\kern-.025em b}\kern-.08em
    T\kern-.1667em\lower.7ex\hbox{E}\kern-.125emX}}

\newtheorem{theorem}{Theorem}
\newtheorem{lemma}{Lemma}
\newtheorem{proposition}{Proposition}
\newtheorem{corollary}{Corollary}

\newcommand{\supp}{\operatorname{supp}}

\newcommand{\Var}[1]{\textnormal{\textsf{Var}}\!\left({#1}\right)} 

\newcommand{\E}[2][]{\textnormal{\textsf{E}}_{#1}\!\left[#2\right]} 
\newcommand{\Prob}{\textnormal{Pr}} 
\DeclareMathOperator{\Tr}{Tr}

\title{Non-Asymptotic Bounds via the Nussbaum–Szkoła Mapping
for Quantum Hypothesis Testing}
\author{Jorge Lizarribar-Carrillo, Gonzalo Vazquez-Vilar,~\IEEEmembership{Member,~IEEE,} and Tobias Koch,~\IEEEmembership{Senior Member,~IEEE}
\thanks{The authors have received funding from the Spanish Ministerio de Ciencia, Innovación y Universidades under Grant PID2024-159557OB-C21 (MICIU/AEI/10.13039/501100011033 and ERDF/UE) and from the Comunidad de Madrid under Grant IDEA-CM (TEC-2024/COM-89). The material in this paper was presented in part at the 2066 IEEE International Symposium on Information Theory, Guangzhou, China, June--July 2026 [DOI:10.1109/ISIT62367.2026.11653873].}
\thanks{J.~Lizarribar-Carrillo, G.~Vazquez-Vilar and T.~Koch are with the Department of Signal Theory and Communications, Universidad Carlos III de Madrid, 28911 Leganés,  Spain, and also with the
Gregorio Marañón Health Research Institute, 28007 Madrid, Spain (emails: jlizarri@ing.uc3m.es, gonzalo.vazquez@uc3m.es, tkoch@ing.uc3m.es).}}

\begin{document}
\pagestyle{plain}

\maketitle

\begin{abstract}
Quantum hypothesis testing concerns the discrimination between quantum states. This paper introduces novel lower and upper bounds for asymmetric quantum hypothesis testing based on the Nussbaum–Szkoła mapping. Together, these bounds provide a unified recovery of both converse and achievability results across all major asymptotic regimes, including large-, moderate-, and small-deviations. Unlike existing bounds, which either rely on technically involved information-spectrum arguments or suffer from fixed prefactors and limited applicability in the non-asymptotic regime, the proposed bounds arise from a unified framework and enable, in some cases, the direct use of classical results. It is further demonstrated that these bounds provide accurate approximations to the optimal quantum error trade-off function at small blocklengths. Numerical comparisons with existing bounds, including those based on fidelity and information spectrum methods, highlight their improved tightness.
\end{abstract}

\begin{IEEEkeywords}
Quantum hypothesis testing, finite blocklength, Nussbaum--Szko{\l}a mapping, error exponents, small-deviations analysis, non-asymptotic bounds.
\end{IEEEkeywords}

\section{Introduction}

Hypothesis testing is a central problem in information theory, concerned with distinguishing between two competing sources based on observed data. In the quantum setting, this task amounts to distinguishing between copies of two quantum states using measurements described by positive operator-valued measures (POVMs). The non-commutativity of quantum states gives rise to phenomena with no classical analogue, making the characterization of optimal error probabilities substantially more challenging.

A binary hypothesis test involves two types of error: a \mbox{type-I} error, which occurs when the null hypothesis is incorrectly rejected, and a type-II error, which occurs when it is incorrectly accepted. The fundamental trade-off between these errors is typically formalized by optimizing one error probability subject to a constraint on the other.
In the quantum setting, this trade-off admits an exact characterization via a semidefinite program (SDP). 
However, the dimension of the resulting SDP grows rapidly with the dimension of the quantum states.
This rapid scaling renders exact computations infeasible beyond very small systems, thereby motivating the development of analytically tractable and computationally efficient bounds that capture the essential behavior of the error probability.

Two converse bounds are particularly relevant in this context: 
The first, established by Nussbaum and Szkoła~\cite{nussbaum2009lower}, provides a fundamental link between quantum and classical hypothesis testing. This result bounds the optimal average error probability by one-half of the classical optimal error probability associated with the so-called Nussbaum–Szkoła distributions, which  are explicitly constructed from the spectral decompositions of the original quantum states $\rho$ and $\sigma$, effectively embedding the non-commutative geometry of the Hilbert space into a joint classical distribution. 
The second, proposed by Pereira \emph{et al.}~\cite{pereira2023analytical}, provides a lower bound in the asymmetric case based on the quantum fidelity $0 \leq F(\rho,\sigma) \leq 1$, which quantifies the overlap between the two quantum states.

Regarding achievability bounds, we highlight two relevant results. Audenaert \emph{et al.}~\cite{audenaert2012quantum} established an upper bound in terms of the quantum Chernoff overlap, a quantity defined through a geometric mean of the two quantum states, building upon earlier results by Hayashi~\cite{hayashi2006error}. Second, Cheng \emph{et al.}~\cite{cheng2026multiple} established an upper bound via the harmonic mean, which can further be used to derive the achievability counterpart to the Nussbaum--Szkoła bound; namely, it proves that the optimal average error probability is upper-bounded by twice the classical optimal error probability under the Nussbaum--Szkoła distributions.

The asymptotic analysis of quantum hypothesis testing considers the discrimination of multiple copies of quantum states, described by the tensor-product states $\rho^{\otimes n}$ and $\sigma^{\otimes n}$. For large numbers of copies~$n$, asymptotic methods characterize the type-II error probability $\beta_n$ under a prescribed scaling of the type-I error $\alpha_n$. The \textit{small-deviations regime} covers the asymptotic analysis of $\beta_n$ when $\alpha_n$ is fixed. The quantum Stein lemma establishes the error exponent in this setting \cite{hiaipetz1991, ogawahayashi2004}, thereby providing the leading term of order $n$ in the expansion of $-\log \beta_n$. A second-order analysis refines this result by characterizing the correction term of order $\sqrt{n}$. In the quantum setting, two pioneering works approach this problem independently from different perspectives: Li~\cite{li2014second} employs elementary algebraic techniques and probability bounds derived from the Nussbaum–Szkoła mapping to characterize the second-order term, while Tomamichel and Hayashi~\cite{tomamichel2013hierarchy} use one-shot entropic quantities---such as the information spectrum relative entropy---to obtain the same result.

When the type-I error decays subexponentially, the problem falls in the \textit{moderate-deviations regime}. The exponent of the type-II error was analyzed asymptotically in~\cite{chubb2017moderate, cheng2017moderate}. In \cite{chubb2017moderate}, Chubb \emph{et al.} used a combination of relative entropy inequalities~\cite{tomamichel2013hierarchy} and a Berry–Esseen–type result due to Rozovsky~\cite{rozovsky2002estimate}. In~\cite{cheng2017moderate}, Cheng and Hsieh derived the achievability by using a martingale inequality due to Sason~\cite{sason2012moderate}, originally developed for classical moderate deviations, while their converse follows directly from a sharp converse Hoeffding bound.

When both errors decay exponentially, the problem falls in the \emph{large-deviations regime}, with performance described by the error exponents $A = -\lim_{n\to\infty}\tfrac{1}{n}\log \alpha_n$ and $B = -\lim_{n\to\infty}\tfrac{1}{n}\log \beta_n$. In the quantum setting, the converse part of this analysis was established by Nussbaum and Szkoła~\cite{nussbaum2009lower}, while the achievability was proven by Audenaert \emph{et al.}~\cite{audenaert2007discriminating}, together yielding the quantum generalization of the classical Chernoff exponent. Based on these results, the Hoeffding bound in the asymmetric setting was established by Hayashi and Nagaoka~\cite{hayashi2006error,nagaoka2006converse}.

In this work, we establish bounds for asymmetric quantum hypothesis testing based on the Nussbaum--Szkoła mapping. First, we introduce a converse bound in terms of the classical error trade-off of the corresponding Nussbaum--Szkoła distributions, parameterized by a prefactor $s \in [0,1]$. For a testing problem between pure and mixed states, we complement this result with an analogous upper bound where the type-I and type-II errors of the corresponding Nussbaum–Szkoła test are weighted by $1/s_1$ and $1/s_2$, respectively, where $(s_1,s_2)$ belong to a certain region of $[0,1]^2$. For $s_1=s$ and $s_2=1-s$, this bound recovers an achievability bound recently established by Cheng et al. [3] for a test between arbitrary states. Together, these bounds provide a unified and conceptually simple framework that recovers known results across the main asymptotic regimes of quantum hypothesis testing: large-, moderate-, and small-deviations.
This unification follows from a single expression by selecting appropriate sequences of the parameter $s \equiv s_n$. In certain regimes, it further enables a direct transfer of classical hypothesis testing results to the quantum setting.
We further show that our converse bound is asymptotically tight for pure, asymptotically orthogonal states, implying that the associated multiplicative constants cannot be improved in general.

The rest of the paper is organized as follows. Section~\ref{sec:HT} introduces the binary hypothesis testing problem and revisit basic results regarding the quantum error trade-off function. Section~\ref{sec:NS} discusses the Nussbaum-Szkoła mapping. Sections~\ref{sec:LB} and \ref{sec:UB} present our lower bound and upper bound on the error probability. Section~\ref{sec:asymp} analyzes these bounds asymptotically in the large-, moderate-, and small-deviations regimes and demonstrates that they are asymptotically tight in these regimes. Section~\ref{sec:discussion} concludes the paper with a comparison of our lower and upper bounds with previous bounds obtained in the literature in the non-asymptotic setting.

\subsection{Notation}
Throughout this paper, finite-dimensional complex Hilbert spaces are denoted by capital calligraphic letters such as $\mathcal{H}$, with dimension $d \triangleq \dim(\mathcal{H})$. The identity operator on $\mathcal{H}$ is denoted by $I$. We denote the space of bounded linear operators on $\mathcal{H}$ by $\mathcal{B}(\mathcal{H})$, and the set of density operators (self-adjoint, positive semidefinite operators with unit trace) by $\mathcal{D}(\mathcal{H})$. For self-adjoint operators $A, B \in \mathcal{B}(\mathcal{H})$, $A \succeq B$ (or $B \preceq A$) indicates that $A - B$ is positive semidefinite, and $A \succ 0$ denotes positive definiteness. The trace of an operator $A$ is written as $\Tr[A]$. For a self-adjoint operator $A$, $\{A \succeq 0\}$ (and respectively $\{A \succ 0\}$) denotes the orthogonal projection onto the subspace spanned by eigenvectors corresponding to non-negative (respectively strictly positive) eigenvalues.

\section{Hypothesis testing}
\label{sec:HT}

We study the problem of discriminating between two quantum states. Specifically, let us consider the density operators $\rho$ and $\sigma$,  acting on some finite-dimensional complex Hilbert space~$\mathcal{H}$ with dimension $d$, and define the hypotheses 
\begin{equation}
 \mathrm{H}_0\colon \rho,\qquad
 \mathrm{H}_1\colon \sigma.
\end{equation}
In this binary setting, we distinguish between two error types:
\begin{itemize}
    \item The type-I error is the error of accepting $\mathrm{H}_1$ when the true state is the null hypothesis $\mathrm{H}_0\colon \rho$.
    \item The type-II error is the error of accepting $\mathrm{H}_0$ when the true system state is the alternative hypothesis $\mathrm{H}_1\colon \sigma$.
\end{itemize}
A binary test is defined by a positive self-adjoint operator $\Pi$ acting on~$\mathcal{H}$ such that $0\preceq \Pi \preceq I$. For a test $\Pi$ associated with $\mathrm{H}_1$, let $\bar\Pi \triangleq I-\Pi$. The type-I and type-II error probabilities are, respectively,
\begin{equation}
  \alpha(\Pi) = \Tr[\Pi \rho], \qquad \beta(\Pi) = \Tr[\bar\Pi \sigma] = 1 - \Tr[\Pi \sigma].
\end{equation}
The two error probabilities cannot be made arbitrarily small at the same time.
The best achievable trade-off between these probabilities corresponds to the Pareto-optimal boundary
\begin{equation}
  \beta_{\alpha}(\rho,\sigma) = \inf_{\Pi: \alpha(\Pi) \leq \alpha} \beta(\Pi).
  \label{eqn:beta-alpha-def}
\end{equation}
Let $A,B$ be positive semidefinite operators, define
\begin{IEEEeqnarray}{lCl}
    \Tr[\min\{A,B\}] &\triangleq& \min_{0 \preceq \Pi \preceq \mathds{1}} \Tr[A \Pi] + \Tr[B(\mathds{1} - \Pi)] \notag \\
    & = &\Tr[A\{A-B \preceq 0\}] + \Tr[B\{A-B \succ 0\}],\label{eq:trmin}
\end{IEEEeqnarray}
where the expressions in braces denote the spectral projections onto the indicated eigenspaces. The quantum error trade-off function admits the following variational formulation given in the following lemma:
\begin{lemma}[\hspace{-0.01mm}{\cite[Lemma~2]{vazquez2016multiple}}]\label{lemma:1}
Let  $\rho$ and $\sigma$ be two density operators. Then,
\begin{IEEEeqnarray}{lCl}
\beta_{\alpha}(\rho,  \sigma) &=& \sup_{t\geq 0} \left\{\Tr[\min\{t\rho,\sigma\}] - t\alpha \right\} \notag \\
&=&\sup_{t\geq 0} \left\{\Tr[\sigma \Pi_t] + t \left( \Tr[\rho\bar{\Pi}_t] - \alpha\right) \right\}, \label{eq:var_min}
\end{IEEEeqnarray}
where $\Pi_t \triangleq \{ t\rho - \sigma \succeq 0\}$  and $\bar{\Pi}_t = I - \Pi_t$. 
\end{lemma}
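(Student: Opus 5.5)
The plan is to prove the first equality by Lagrangian (minimax) duality for the linear program defining $\beta_\alpha(\rho,\sigma)$ over the convex, compact set of tests, and then obtain the second equality by evaluating the inner minimization with the Helstrom-type identity \eqref{eq:trmin}.

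\emph{Weak duality ($\geq$).} Fix any test $\Pi$ with $\alpha(\Pi)\le\alpha$ and any $t\ge 0$. Then
\begin{equation*}
\beta(\Pi)\;\ge\;\beta(\Pi)+t\bigl(\alpha(\Pi)-\alpha\bigr)=\Tr[\sigma\bar\Pi]+\Tr[t\rho\,\Pi]-t\alpha\;\ge\;\Tr[\min\{t\rho,\sigma\}]-t\alpha ,
\end{equation*}
where the last step is the variational definition in \eqref{eq:trmin} with $A=t\rho$, $B=\sigma$. Taking the infimum over $\Pi$ and the supremum over $t$ gives $\beta_\alpha\ge\sup_{t\ge0}\{\cdots\}$.

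\emph{Strong duality ($\leq$).} The Lagrangian $L(\Pi,t)=\Tr[\sigma\bar\Pi]+t(\Tr[\rho\Pi]-\alpha)$ is affine in $\Pi$ and in $t$, and the set $\{0\preceq\Pi\preceq I\}$ is convex and compact. Sion's minimax theorem therefore gives
\begin{equation*}
\inf_{\Pi}\sup_{t\ge0}L=\sup_{t\ge0}\inf_{\Pi}L.
\end{equation*}
The left-hand side equals $\beta_\alpha$, because the inner supremum is $+\infty$ whenever $\alpha(\Pi)>\alpha$. The right-hand side is $\sup_t\{\Tr[\min\{t\rho,\sigma\}]-t\alpha\}$. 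Alternatively, one can avoid the minimax theorem with an explicit argument. The map $t\mapsto\Tr[\min\{t\rho,\sigma\}]$ is concave and piecewise linear, with one-sided slopes $\Tr[\rho\{t\rho-\sigma\prec0\}]$ and $\Tr[\rho\{t\rho-\sigma\preceq0\}]$. Pick $t^\star$ where the superdifferential contains $\alpha$, and take $\Pi$ to be a convex combination of these two projections attaining $\Tr[\rho\Pi]=\alpha$. This $\Pi$ attains equality in the chain above.

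The main technical care is in the edge cases. If $\alpha\ge1$, the supremum is attained at $t=0$ with value $0$. If $\alpha$ lies below $\Tr[\rho\{\sigma\succ0\}^{\perp}]$-type thresholds, the optimal $t$ may be unbounded. Sion's theorem handles all of these uniformly, so I would use it as the main argument.

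\emph{Second equality.} By \eqref{eq:trmin}, the minimizing test is the projection onto $\{t\rho-\sigma\preceq0\}$, which gives
\begin{equation*}
\Tr[\min\{t\rho,\sigma\}]=\Tr[t\rho\{t\rho-\sigma\preceq0\}]+\Tr[\sigma\{t\rho-\sigma\succ0\}].
\end{equation*}
Using $\Pi_t=\{t\rho-\sigma\succeq0\}$ instead changes nothing. On the kernel of $t\rho-\sigma$ the operators $t\rho$ and $\sigma$ coincide, so that eigenspace contributes identically whether it is assigned to $\Pi_t$ or to $\bar\Pi_t$. Hence $\Tr[\min\{t\rho,\sigma\}]=\Tr[\sigma\Pi_t]+t\Tr[\rho\bar\Pi_t]$. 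Subtracting $t\alpha$ yields the stated form.
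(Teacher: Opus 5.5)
Your proof is correct. The paper does not prove this lemma; it imports it from \cite[Lemma~2]{vazquez2016multiple}. There is therefore no in-paper argument to compare against, and your write-up works as a self-contained justification.

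Each step of your main route holds:
\begin{itemize}
\item The weak-duality chain is exactly \eqref{eq:trmin} with $A=t\rho$, $B=\sigma$.
\item Sion's theorem applies because the set of tests is compact and convex, and $L$ is continuous and affine in each variable. Compactness is needed only on the $\Pi$ side, so the unbounded range of $t$ does no harm.
\item The kernel argument for the second equality is right: $\Tr[t\rho K]=\Tr[\sigma K]$ for the projector $K$ onto $\ker(t\rho-\sigma)$.
\end{itemize}
Compared with a Neyman--Pearson-style proof that exhibits an optimal randomized test $\{t^\star\rho-\sigma\prec0\}+\gamma K$, the minimax route handles the boundary cases uniformly at no extra cost.

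There is one inaccuracy in your alternative argument. The map $t\mapsto\Tr[\min\{t\rho,\sigma\}]$ is concave but not piecewise linear when $\rho$ and $\sigma$ do not commute; for pure states it is the smooth, strictly concave function in \eqref{eq:minpure}. Your one-sided slope formulas are still correct, by Danskin's theorem applied to a minimum over the compact set of tests. So that argument survives once the claim of piecewise linearity is dropped.

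Your description of the edge cases is also slightly off:
\begin{itemize}
\item For every $\alpha\ge\Tr[\rho\{\sigma\succ0\}]$, the choice $t^\star=0$ already works and gives $\beta_\alpha=0$.
\item The right derivative tends to $0$ as $t\to\infty$, so a finite $t^\star$ exists for every $\alpha>0$.
\item The supremum can fail to be attained only at $\alpha=0$. For pure states, $\beta_0=w_1$ is reached only as $t\to\infty$.
\end{itemize}
The $\alpha=0$ case is exactly the one Sion's theorem covers without extra work.
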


A direct consequence of this lemma is that, for the discrete classical distributions $P$ and $Q$, we have
\begin{equation}
\beta_{\alpha}(P,  Q) = \sup_{t\geq 0} \left\{ \sum_{x \in \mathcal{X}} \min\left\{tP(x), Q(x)\right\} - t\alpha\right\}. \label{eq:var_class}
\end{equation}
Indeed, \eqref{eq:var_class} follows directly from \eqref{eq:var_min} in the special case where $\rho$ and $\sigma$ commute and are simultaneously diagonalizable in a common basis with diagonal entries given by $P$ and $Q$, respectively. In this commuting regime, $\Pi_t$ reduces to the indicator projection onto the set $\{x \in \mathcal{X} \mid t P(x) \ge Q(x)\}$, simplifying the trace to a sum over $\min(t P(x), Q(x))$ for all $x \in \mathcal{X}$.

\section{The Nussbaum-Szkoła Mapping}
\label{sec:NS}

The Nussbaum--Szkoła mapping~\cite{nussbaum2009lower} provides a fundamental bridge connecting quantum states to classical probability distributions. Let $\rho$ and $\sigma$ be density operators acting on a finite-dimensional Hilbert space $\mathcal{H}$ of dimension $d$, with spectral decompositions given by
 \begin{equation}
\rho = \sum_{i=1}^d \lambda_i \ket{x_i} \bra{x_i}, \quad \sigma = \sum_{j=1}^d \mu_j\ket{y_j}\bra{y_j}, \label{eq:spec}
\end{equation}
where $\{\ket{x_i}\}_{i=1}^d$ and $\{\ket{y_j}\}_{j=1}^d$ are orthonormal bases of $\mathcal{H}$, and $\bm{\lambda} = (\lambda_1, \dots, \lambda_d)$ and $\bm{\mu} = (\mu_1, \dots, \mu_d)$ are the respective eigenvalue spectra. The Nussbaum-Szkoła mapping distributions are then defined by the overlaps between the eigenstates of $\rho, \sigma$ and their associated eigenvalues
\begin{equation}
P_{ij} \triangleq \lambda_i \lvert\braket{x_i | y_j}\rvert^2, \quad Q_{ij} \triangleq \mu_j\lvert\braket{x_i|y_j}\rvert^2. \label{eq:Ns-mapping}
\end{equation}
The Nussbaum–Szkoła distributions $P$ and $Q$ are thus defined over a discrete alphabet of cardinality $d^2$.
Although they do not generally correspond to the outcome statistics of a common quantum measurement, they provide a classical probabilistic representation of the spectral structure of $\rho$ and $\sigma$. In particular, their classical Chernoff overlap coincides exactly with the quantum Chernoff overlap. Therefore,
\begin{equation}
    \kappa(p) \triangleq \log \operatorname{Tr}\!\left[\rho^p \sigma^{1-p}\right] = \log \sum_{ij} P_{ij}^p Q_{ij}^{1-p},
\end{equation}
which implies the equality of the relative entropy, Petz--R\'enyi relative entropy, and higher-order cumulants, since
\begin{IEEEeqnarray}{lCcCl}
    D_p(\rho\|\sigma) &\triangleq & \frac{\kappa(p)}{p - 1} & = & D_p(P\|Q), \quad \forall p \in (0,1) \cup (1,\infty), \label{eq:D_p}\\
    D(\rho\|\sigma) &\triangleq & \left.\frac{d}{dp}\kappa(p)\right|_{p=1} & = & D(P\|Q), \label{eq:in} \\
    V(\rho \| \sigma) &\triangleq & \left.\frac{d^2}{dp^2}\kappa(p)\right|_{p=1} & = & V(P\|Q), \label{eq:relvariance}\\
    T(\rho\|\sigma) &\triangleq & \left.\frac{d^3}{dp^3}\kappa(p)\right|_{p=1} & = & T(P\|Q). \label{eq:fin}
\end{IEEEeqnarray}
Moreover, the mapping preserves tensor-product structures: for product states \(\rho^{\otimes n}\) and \(\sigma^{\otimes n}\), the corresponding Nussbaum--Szkoła distributions decompose into the i.i.d.\ product distributions \(P^{\otimes n}\) and \(Q^{\otimes n}\). These properties were used in the literature to establish several asymptotic results in quantum hypothesis testing (see, e.g.,  \cite{chubb2017moderate, nussbaum2006lower}). Intuitively, relating the quantum error probabilities to those of the corresponding classical distributions enables the direct application of classical limit theorems---including the central limit theorem, moderate deviation principles, and large deviation theory---to derive sharp converse and achievability bounds in quantum hypothesis testing.

\section{Error Probability Lower Bound}
\label{sec:LB}

\subsection{The Lower Bound}
The next theorem relates the quantum and classical hypothesis testing trade-off functions of \eqref{eq:spec} and \eqref{eq:Ns-mapping}, respectively.
Specifically, we show that the Pareto-optimal boundary of the quantum test  \eqref{eq:spec} is lower-bounded by the Pareto-optimal boundary of the test \eqref{eq:Ns-mapping}, with the type-I and type-II error probabilities weighted by the respective factors $1-s$ and $s$, where $s$ is an arbitrary parameter satisfying $0 \leq s < 1$.

\begin{theorem}\label{thm:low}
For a binary quantum hypothesis test between states $\rho$ and $\sigma$ \eqref{eq:spec} with Nussbaum-Szkoła mapping distributions $P$ and $Q$ \eqref{eq:Ns-mapping}, we have that
\begin{IEEEeqnarray}{lCl}
\beta_{\alpha}(\rho, \sigma)  &\geq & s \beta_{\frac{1}{1-s}\alpha}(P, Q), \label{eq:thmlow}
\end{IEEEeqnarray}
for every $0 \leq s < 1$ and $0 \leq \alpha \leq 1-s$.
\end{theorem}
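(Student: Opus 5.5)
The plan is to combine the variational formula of Lemma~\ref{lemma:1} with a weighted version of the original Nussbaum--Szko{\l}a inequality. Concretely, I will first show that for all positive semidefinite $A=\sum_i a_i\ket{x_i}\bra{x_i}$ and $B=\sum_j b_j\ket{y_j}\bra{y_j}$, and every $s\in(0,1)$,
\begin{equation*}
\Tr[\min\{A,B\}] \;\geq\; \sum_{i,j} \min\{(1-s)a_i,\, s\, b_j\}\,\lvert\braket{x_i|y_j}\rvert^2 .
\end{equation*}
The case $s=0$ of the theorem is trivial, since the right-hand side of \eqref{eq:thmlow} vanishes.

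To prove this inequality, I use \eqref{eq:trmin}, which shows that the minimum defining $\Tr[\min\{A,B\}]$ is attained by an orthogonal projection $\Pi$. I write $\bar\Pi=I-\Pi$. Since $\Pi$ and $\bar\Pi$ are projections,
\begin{equation*}
\Tr[A\Pi]=\sum_{i,j} a_i\lvert\braket{x_i|\Pi|y_j}\rvert^2, \qquad \Tr[B\bar\Pi]=\sum_{i,j} b_j\lvert\braket{x_i|\bar\Pi|y_j}\rvert^2 .
\end{equation*}
Both identities follow by inserting $\sum_j\ket{y_j}\bra{y_j}=I$ or $\sum_i\ket{x_i}\bra{x_i}=I$ and using $\Pi^2=\Pi$ and $\bar\Pi^2=\bar\Pi$.

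Now set $u=\braket{x_i|\Pi|y_j}$ and $v=\braket{x_i|\bar\Pi|y_j}$, so that $u+v=\braket{x_i|y_j}$. The weighted Cauchy--Schwarz bound gives $\lvert u+v\rvert^2\le \lvert u\rvert^2/(1-s)+\lvert v\rvert^2/s$. Hence
\begin{equation*}
a_i\lvert u\rvert^2+b_j\lvert v\rvert^2=(1-s)a_i\frac{\lvert u\rvert^2}{1-s}+s\,b_j\frac{\lvert v\rvert^2}{s}\ge \min\{(1-s)a_i,s\,b_j\}\,\lvert\braket{x_i|y_j}\rvert^2 .
\end{equation*}
Summing over $i,j$ yields the claimed inequality. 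This weighted replacement for the factor $1/2$ in the original Nussbaum--Szko{\l}a argument is the key step. The only delicate point is that restricting to projective $\Pi$ must be justified, and this is exactly what \eqref{eq:trmin} provides.

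Next, I apply the inequality with $A=t\rho$ and $B=\sigma$. By \eqref{eq:Ns-mapping},
\begin{equation*}
\Tr[\min\{t\rho,\sigma\}]\ge \sum_{i,j}\min\{(1-s)tP_{ij},\,sQ_{ij}\} = s\sum_{i,j}\min\Big\{\tfrac{(1-s)t}{s}P_{ij},\,Q_{ij}\Big\}.
\end{equation*}
Substituting this into the first expression of Lemma~\ref{lemma:1} and changing variables to $t'=t(1-s)/s$, which is a bijection of $[0,\infty)$, I obtain
\begin{equation*}
\beta_\alpha(\rho,\sigma)\ge s\sup_{t'\ge 0}\Big\{\sum_{i,j}\min\{t'P_{ij},Q_{ij}\}-t'\tfrac{\alpha}{1-s}\Big\}.
\end{equation*}
By \eqref{eq:var_class}, the right-hand side equals $s\,\beta_{\alpha/(1-s)}(P,Q)$. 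The constraint $\alpha\le 1-s$ only ensures that $\alpha/(1-s)\le 1$, so the classical trade-off is evaluated at a valid type-I error level.
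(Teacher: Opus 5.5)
Your proof is correct and follows essentially the same route as the paper. Both arguments take the optimal projective test from \eqref{eq:trmin} and Lemma~\ref{lemma:1}, expand the two trace terms in the eigenbases, apply a weighted Cauchy--Schwarz inequality with weights $s$ and $1-s$, and rescale $t$; your $t'=t(1-s)/s$ is the paper's $t=\delta t'$ with $\delta=s/(1-s)$. The only difference is organizational: you isolate the intermediate step as a standalone weighted Nussbaum--Szko{\l}a inequality $\Tr[\min\{A,B\}]\ge\sum_{i,j}\min\{(1-s)a_i,s\,b_j\}\lvert\braket{x_i|y_j}\rvert^2$ for arbitrary positive semidefinite $A,B$, which is slightly cleaner and reusable.
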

\begin{IEEEproof} Using  that $\sum_{i=1}^d\!\ket{x_i}\bra{x_i}$ and $\sum_{j=1}^{d}\ket{y_j}\bra{y_j}$ are equal to the identity matrix, the idempotency of $\Pi_t$ and $\bar{\Pi}_t$, and the cyclic property of the trace, we rewrite the trace terms in Lemma~\ref{lemma:1} as
\begin{equation}
    \Tr[\sigma \Pi_t] = \sum_{ij} \mu_j \lvert \braket{x_i | \Pi_t | y_j}\rvert^2, \qquad
    \Tr[\rho \bar{\Pi}_t] = \sum_{ij} \lambda_i \lvert \braket{x_i | \bar{\Pi}_t | y_j}\rvert^2. \label{eq:term21} 
\end{equation} 
Substituting the identities \eqref{eq:term21} into the variational formulation of $\beta_\alpha(\rho,  \sigma)$ \eqref{eq:var_min}, we obtain for a fixed $t \geq 0$
\begin{equation}
\beta_\alpha(\rho,  \sigma) \geq \sum_{ij} \mu_j \lvert \braket{x_i | \Pi_t | y_j}\rvert^2 + \sum_{ij} t\lambda_i \lvert \braket{x_i | \bar{\Pi}_t | y_j}\rvert^2 - t\alpha.
\end{equation}
Let $t = \delta t'$ for $\delta \geq 0$. After joining both sums, and lower-bounding $t'\lambda_i$ and $\mu_j$ by $\min(t'\lambda_i, \mu_j)$, this yields
\begin{equation}
\beta_\alpha(\rho,  \sigma) \geq \sum_{ij} \min(t'\lambda_i, \mu_j)\Bigl( \lvert \braket{x_i | \Pi_t | y_j}\rvert^2  +  \delta \lvert \braket{x_i | \bar{\Pi}_t | y_j}\rvert^2 \Bigr) - \delta t'\alpha. \label{eq:ineqdelta}
\end{equation}
Next, for an arbitrary $0\leq s < 1$, define the vectors
\begin{equation}
u \triangleq \left[\sqrt{s},  \sqrt{1-s} \right], \qquad
v \triangleq \left[\braket{x_i | \Pi_t | y_j},  \sqrt{\delta} \braket{x_i | \bar{\Pi}_t | y_j} \right].\label{eq:vecs}
\end{equation}
The Cauchy-Schwarz inequality $\lvert\braket{u,v}\rvert^2 \leq \braket{u,u} \braket{v,v}$ yields
\begin{equation}
    \left|\sqrt{s}\braket{x_i | \Pi_t | y_j} +  \sqrt{1-s}\sqrt{\delta} \braket{x_i | \bar{\Pi}_t | y_j} \right|^2 
    \leq \lvert\braket{x_i | \Pi_t | y_j}\rvert^2 +  \delta \lvert \braket{x_i | \bar{\Pi}_t | y_j}\rvert^2. \label{eq:cauchyschwarz}
\end{equation}
By choosing $\delta = s/(1 - s)$ in \eqref{eq:cauchyschwarz}, and using that $\Pi_t + \bar{\Pi}_t = I$, we obtain
\begin{equation}
    s \lvert\braket{x_i | y_j} \rvert^2 \leq\lvert\braket{x_i | \Pi_t | y_j}\rvert^2 + \frac{s}{1-s} \lvert \braket{x_i | \bar{\Pi}_t | y_j}\rvert^2. \label{eq:cauchyschwarz2}
\end{equation}
This particular choice of $\delta$ enforces $\delta \geq 0$ for $0 \leq s < 1$.
Applying \eqref{eq:cauchyschwarz2} to \eqref{eq:ineqdelta} then yields
\begin{IEEEeqnarray}{lCl}
\beta_\alpha(\rho,  \sigma) &\geq & s\Biggl(\sum_{ij} \min(t'\lambda_i, \mu_j) \lvert\braket{x_i | y_j}\rvert^2  - \frac{1}{1-s}t'\alpha \biggr)\\
&= &s\Biggl(\sum_{ij} \min(t'P_{ij}, Q_{ij}) - \frac{1}{1-s}t'\alpha \biggr). \label{eq:ineqsclass}
\end{IEEEeqnarray}
By \eqref{eq:var_class}, taking the supremum over $t'$ in \eqref{eq:ineqsclass} completes the proof.
\end{IEEEproof}
\subsection{Pure-State Discrimination}
The bound in Theorem~\ref{thm:low} is asymptotically tight for a test between pure states that are asymptotically orthogonal. Consider the discrimination between the pure states
\begin{equation}
 \mathrm{H}_0\colon \rho = \ket{x_1}\bra{x_1}, \qquad
 \mathrm{H}_1 \colon \sigma = \ket{y_1}\bra{y_1},\label{eq:HT-pure-H0-H1}
\end{equation}
with overlap parameter
\begin{equation}
    w_1 \triangleq \left|\braket{x_1|y_1}\right|^2 \in (0,1).
\end{equation}
\subsubsection{Quantum test}
For the pure quantum states $\rho, \sigma$ given in \eqref{eq:HT-pure-H0-H1}, the quantum error trade-off is characterized in the following lemma.

\begin{lemma}[\hspace{-0.01mm}{Pereira et al.~\cite{pereira2023analytical}}]\label{lem:pure-tradeoff}
Let $\rho, \sigma$ be pure states. Then, the optimal quantum error trade-off is given by
\begin{equation}\label{eq:pure_betaq}
\beta_{\alpha}(\rho, \sigma) = \alpha - 2\alpha w_1 + w_1 - 2 \sqrt{(1-\alpha)\alpha(1-w_1)w_1},
\end{equation}
for $0\leq \alpha \leq w_1$, where $a$ is defined in~\eqref{eq:def_a}.
\end{lemma}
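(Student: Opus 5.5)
The plan is to reduce the problem to the two-dimensional subspace $\mathcal{S}=\mathrm{span}\{\ket{x_1},\ket{y_1}\}$ and evaluate the variational formula of Lemma~\ref{lemma:1} in closed form. For every $t\ge0$ the operator $t\rho-\sigma$ is supported on $\mathcal{S}$ and has rank at most two. Hence $\Pi_t=\{t\rho-\sigma\succeq0\}$ splits into two parts: the projection onto $\mathcal{S}^\perp$, which contributes nothing to $\Tr[\sigma\Pi_t]$ or $\Tr[\rho\bar\Pi_t]$, and the projection onto the positive eigenvector of $t\rho-\sigma$ within $\mathcal{S}$. Since $0<w_1<1$, $\det(t\rho-\sigma)|_{\mathcal{S}}=-t(1-w_1)<0$ for $t>0$, so exactly one eigenvalue is positive and one is negative. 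We may therefore write $\Pi_t|_{\mathcal{S}}=\ket{e_t}\bra{e_t}$ for a unit vector $\ket{e_t}\in\mathcal{S}$.

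First I compute $\Tr[\min\{t\rho,\sigma\}]=\tfrac12\bigl(\Tr[t\rho+\sigma]-\|t\rho-\sigma\|_1\bigr)$. The $2\times2$ restriction has trace $t-1$ and determinant $-t(1-w_1)$, so
\begin{equation*}
\|t\rho-\sigma\|_1=\sqrt{(t-1)^2+4t(1-w_1)}=\sqrt{(1+t)^2-4tw_1},
\end{equation*}
which gives
\begin{equation*}
\Tr[\min\{t\rho,\sigma\}]=\tfrac12\Bigl(1+t-\sqrt{(1+t)^2-4tw_1}\Bigr).
\end{equation*}
By Lemma~\ref{lemma:1}, $\beta_\alpha(\rho,\sigma)=\sup_{t\ge0} f(t)$ with $f(t)=\tfrac12\bigl(1+t-\sqrt{(1+t)^2-4tw_1}\bigr)-t\alpha$. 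The function $f$ is concave, because $t\mapsto\sqrt{(1+t)^2-4tw_1}$ is a norm-like convex function: it equals the Euclidean norm of $(t-1,\,2\sqrt{t(1-w_1)})$, or one can check the second derivative directly. Hence the supremum is attained at the stationary point,
\begin{equation*}
1-\frac{(1+t)-2w_1}{\sqrt{(1+t)^2-4tw_1}}=2\alpha,
\end{equation*}
or at $t=0$ when no stationary point exists. The condition $0\le\alpha\le w_1$ should be exactly what guarantees an interior maximizer. As $t\to0$ the derivative tends to $w_1-\alpha\ge0$, and as $t\to\infty$ it tends to $-\alpha$.

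The main obstacle is the algebra of solving this stationarity equation and substituting back to obtain the clean form $\alpha+w_1-2\alpha w_1-2\sqrt{\alpha(1-\alpha)w_1(1-w_1)}$. To avoid messy radicals, I would reparametrize instead of solving for $t$ directly. Write the optimal projector as $\ket{e}\bra{e}$ with $|\braket{x_1|e}|^2=\alpha$, so that the type-I error equals $\alpha$. Then $\beta=1-|\braket{y_1|e}|^2$. Parametrizing angles in $\mathcal{S}$ with $\cos^2\theta=w_1$ and $|\braket{x_1|e}|^2=\sin^2\phi=\alpha$, the best choice of $\ket{e}$ makes $|\braket{y_1|e}|^2=\cos^2(\theta-\phi)$. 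Expanding this gives
\begin{equation*}
\beta=\sin^2(\theta-\phi)=\bigl(\sqrt{(1-w_1)(1-\alpha)}-\sqrt{w_1\alpha}\bigr)^2
=\alpha+w_1-2\alpha w_1-2\sqrt{\alpha(1-\alpha)w_1(1-w_1)},
\end{equation*}
which matches \eqref{eq:pure_betaq}. The requirement $\phi\le\theta$, equivalently $\alpha\le w_1$ in the ordering of the angles, is exactly the stated range.

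The proof then combines the two views. The variational bound of Lemma~\ref{lemma:1} with the stationary $t$ supplies the lower bound, since $f(t)\le\beta_\alpha$ for every $t$. The explicit rank-one test $\ket{e}\bra{e}$, padded with the projection onto $\mathcal{S}^\perp$, which changes neither error, supplies the matching upper bound. Checking that the two values coincide at the stationary $t$ is the routine verification.
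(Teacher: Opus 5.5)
\textbf{Verdict: the strategy is viable, but the key geometric identity is false as written and the lower-bound half is never actually carried out.} Your plan has two parts. The first (closed form for $\Tr[\min\{t\rho,\sigma\}]$, then concavity) matches the paper's proof, which then solves the stationarity equation for $t^\star$ and substitutes back. The second, an explicit rank-one test in $\mathcal{S}=\mathrm{span}\{\ket{x_1},\ket{y_1}\}$, is meant to replace that algebra, and it is where the error lies.

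If $\ket{e}\bra{e}$ is the $\mathrm{H}_1$-acceptance projector with $|\braket{x_1|e}|^2=\alpha$, and $\phi$ is the angle between $\ket{e}$ and $\ket{x_1}$, then $\cos^2\phi=\alpha$, not $\sin^2\phi=\alpha$. With your convention you get
\[
\sin^2(\theta-\phi)=\Bigl(\sqrt{(1-w_1)(1-\alpha)}-\sqrt{w_1\alpha}\Bigr)^2=1-\alpha-w_1+2\alpha w_1-2\sqrt{\alpha(1-\alpha)w_1(1-w_1)}.
\]
This is not \eqref{eq:pure_betaq}. It gives $1-w_1$ at $\alpha=0$ instead of $w_1$, and $(1-2w_1)^2$ at $\alpha=w_1$ instead of $0$. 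Your range condition $\phi\le\theta$ also becomes $\alpha\le 1-w_1$ rather than $\alpha\le w_1$.

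With $\cos^2\phi=\alpha$ instead, $\alpha\le w_1$ means $\phi\ge\theta$. The maximal overlap is still $\cos^2(\theta-\phi)$, and
\[
\beta=\sin^2(\theta-\phi)=\Bigl(\sqrt{(1-w_1)\alpha}-\sqrt{w_1(1-\alpha)}\Bigr)^2,
\]
which does expand to \eqref{eq:pure_betaq}. The same mix-up appears in your first paragraph. The positive eigenvector of $t\rho-\sigma$ spans the $\mathcal{S}$-part of $\Pi_t$, which in Lemma~\ref{lemma:1} is the $\mathrm{H}_0$-acceptance region. The rank-one projector whose $\rho$-weight is the type-I error is $\bar\Pi_t$, the \emph{negative} eigenvector.

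The lower bound is also not actually proved. ``Checking that the two values coincide at the stationary $t$'' is exactly the substitution of $t^\star$ into $f$ that makes up the paper's proof, i.e.\ the radical algebra you set out to avoid. You can genuinely bypass it as follows.

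Write $g(t)=\Tr[\min\{t\rho,\sigma\}]=\min_{0\preceq T\preceq I}\{t\Tr[\rho T]+\Tr[\sigma(I-T)]\}$. For $t>0$ the minimizer is unique on $\mathcal{S}$: it is the projector onto the negative eigenvector $\ket{e_t}$, since $\det(t\rho-\sigma)|_{\mathcal{S}}<0$. Danskin's theorem then gives $g'(t)=|\braket{x_1|e_t}|^2$.

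For $0<\alpha<w_1$, stationarity $f'(t^\star)=0$ therefore says the Neyman--Pearson test $\ket{e_{t^\star}}\bra{e_{t^\star}}$ has type-I error exactly $\alpha$, so $f(t^\star)=1-|\braket{y_1|e_{t^\star}}|^2$. Hence
\[
\beta_\alpha(\rho,\sigma)=f(t^\star)\ge\min\{1-|\braket{y_1|e}|^2:\ket{e}\in\mathcal{S},\ |\braket{x_1|e}|^2=\alpha\}=\sin^2(\theta-\phi).
\]
Together with your explicit test this gives equality without ever solving for $t^\star$. The endpoints are checked directly: at $\alpha=0$ the supremum is approached as $t\to\infty$ with value $w_1$, and at $\alpha=w_1$ the maximum is at $t=0$ with value $0$.

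Finally, a minor point on concavity: $(t-1,\,2\sqrt{t(1-w_1)})$ is not affine in $t$, so its Euclidean norm is not automatically convex. Use $(1+t)^2-4tw_1=(t+1-2w_1)^2+4w_1(1-w_1)$ instead, or note that $g$ is concave as a minimum of affine functions.
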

\begin{IEEEproof}
The lemma corresponds to \cite[Eq. (10)]{pereira2023analytical}. We include an alternative proof here for completeness. When $\rho, \sigma$ are pure, $\Tr[\min\{t\rho, \sigma\}]$ admits the closed-form expression \cite[Eq. (2.34)]{Helstrom1976}
\begin{equation}\label{eq:minpure}
\Tr[\min\{t\rho, \sigma\}] = \frac{1}{2}\left(1 + t - \sqrt{(t+1)^2 - 4tw_1}\right).
\end{equation}
Invoking Lemma~\ref{lemma:1} alongside \eqref{eq:minpure}, we obtain
\begin{equation}\label{eq:beta_t}
\beta_{\alpha}(\rho, \sigma) = \sup_{t} \left\{ \frac{1}{2} \left(1 + t - \sqrt{(t+1)^2 - 4tw_1}\right) - t\alpha \right\}.
\end{equation}
Since the objective function in \eqref{eq:beta_t} is concave in $t$ for $0 \leq \alpha \leq a$, differentiating with respect to $t$ and setting the derivative to zero yields the maximizing value
\begin{equation}\label{eq:tstar}
t^{\star} = 2w_1 - 1 + \frac{(1-2\alpha)\sqrt{w_1(1-w_1)}}{\sqrt{\alpha(1-\alpha)}}.
\end{equation}
Substituting \eqref{eq:tstar} into \eqref{eq:beta_t} yields \eqref{eq:pure_betaq} after rearranging terms.
\end{IEEEproof}

Expanding \eqref{eq:pure_betaq} as $w_1\to 0$, we obtain
\begin{equation}
   \beta_\alpha(\rho, \sigma) = \alpha - 2\sqrt{\alpha w_1 } + a + o(w_1), \quad 0 \leq \alpha \leq w_1,\label{eq:taylor}
\end{equation}
where $o(w_1)$ summarizes terms that vanish faster than $w_1$ as \mbox{$w_1\to 0$}. 
\subsubsection{Classical test}
For the above quantum states $\rho$ and $\sigma$,
the Nussbaum-Szkoła mapping from \eqref{eq:Ns-mapping} becomes
\begin{equation}
 P_{ij} = \begin{cases}
               \lvert\braket{x_1|y_j}\rvert^2 , \:\:\: i=1,\\
               0, \:\:\:\:\:\:\quad\text{otherwise},
           \end{cases} \qquad
 Q_{ij} = \begin{cases}
               \lvert\braket{x_i|y_1}\rvert^2 , \:\:\: j=1,\\
               0,\:\:\:\:\:\:\quad\text{otherwise}.
           \end{cases}
\end{equation}
The distributions $P$ and $Q$ exhibit non-overlapping supports,
except in the singular case $(i,j)=(1,1)$ where 
\begin{equation}
 P_{11} = Q_{11} = w_1. \label{eq:def_a}
\end{equation}
The optimal classical test for this problem decides on the correct hypothesis with no error, except when $(i,j)=(1,1)$. In this case, the optimal test can select between $\mathrm{H}_0$ and $\mathrm{H}_1$ at random with some probabilities $p$ and $1-p$, thus incurring an error with probabilities
\begin{equation}
  \alpha = (1-p) P_{11} = (1-p) w_1,\qquad
  \beta = p Q_{11} = p w_1.
\end{equation}
After solving for $p$ in $\alpha$ and substituting into $\beta$, we obtain the error trade-off
\begin{equation}
  \beta_{\alpha}(P,Q) = w_1 - \alpha, \quad 0 \leq \alpha \leq w_1.\label{eq:trade-off-pure-c}
\end{equation}
Therefore, using \eqref{eq:trade-off-pure-c}, the lower bound of Theorem~\ref{thm:low} can be written as
\begin{equation}
    \beta_{\alpha}(\rho,\sigma) \geq s\beta_{\frac{\alpha}{1-s}}(P,Q) = sw_1 - \frac{s}{1-s}\alpha, \quad 0\leq s < 1,
\end{equation}
which, for $s=1-\sqrt{\frac{\alpha}{w_1}}$, becomes
\begin{equation}
\beta_{\alpha}(\rho,\sigma) \geq \alpha - 2\sqrt{\alpha w_1} + w_1.\label{eq:pure-low-ab}
\end{equation}
Comparing \eqref{eq:taylor} with \eqref{eq:pure-low-ab}, we observe that the quantum error trade-off in \eqref{eq:taylor} coincides with the lower bound in \eqref{eq:pure-low-ab} up to terms of order $o(a)$.
For a test between the $n$-th tensor product states $\rho^{\otimes n},  \sigma^{\otimes n}$, we have that $a= \lvert \braket{x_1 | y_1} \rvert^{2n}$. So the assumption $\lvert \braket{x_1 | y_1} \rvert^2 < 1$ implies that $a \to 0$ as $n \to \infty$. This proves that, for distinct pure states, the bound is asymptotically tight as $n\to\infty$. 
\begin{figure}[tb]
    \centering
    \includegraphics[width=0.48\linewidth]{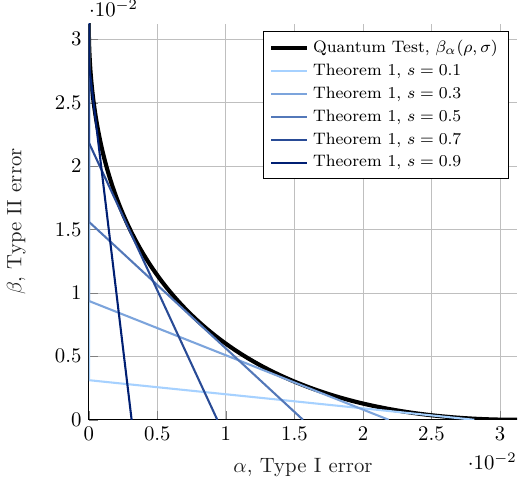}
    \caption{Quantum error trade-off compared to the lower bound from Theorem~\ref{thm:low} for a hypothesis test between the states ${\ket{0}\!\bra{0}}^{\otimes 5}$ and ${\ket{+}\!\bra{+}}^{\otimes 5}$.}
    \label{fig:pure}
\end{figure}

As an example, we consider the $n$-th tensor product of the states  $\rho=\ket{0}\!\bra{0}$ and $\sigma = \ket{+}\!\bra{+}$. Fig.~\ref{fig:pure} shows the quantum error trade-off $\beta_{\alpha}$ for the hypothesis test between $\rho^{\otimes n}$ and $\sigma^{\otimes n}$ when $n=5$, and compares it to the lower bound of Theorem~\ref{thm:low} for different values of $s$. Observe that the upper envelope of the resulting bounds provides an accurate characterization of the quantum error trade-off when both states are pure and approximately orthogonal (in this example $w_1 = 2^{-5}\approx 0.03$).
\section{Error Probability Upper Bound}
\label{sec:UB}
In this section, we first present an upper bound on $\beta_{\alpha} (\rho, \sigma)$, recently presented by Cheng \emph{et al.}~\cite[Prop.~6.3]{cheng2026multiple} for arbitrary quantum states $\rho$ and $\sigma$, which relates the quantum error trade-off to that of the corresponding Nussbaum--Szkoła distributions.
Then, in Theorem~\ref{thm_pure} we show that this result can be further tightened in the particular case in which one of the states is pure.
\begin{theorem}[\hspace{-0.01mm}{Cheng et al.~\cite[Prop.~6.3]{cheng2026multiple}}]\label{thm_general} Let $\rho, \sigma$ be arbitrary quantum states. Then
\begin{equation}
    \beta_{\alpha} (\rho, \sigma) \leq \frac{1}{s}\beta_{(1-s)\alpha}(P, Q),
\end{equation}
where $0 < s < 1$.
\end{theorem}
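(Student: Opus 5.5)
The plan is to work entirely at the level of the variational formula of Lemma~\ref{lemma:1} together with its classical counterpart \eqref{eq:var_class}. Since
\begin{equation*}
\beta_\alpha(\rho,\sigma)=\sup_{t\ge 0}\bigl\{\Tr[\min\{t\rho,\sigma\}]-t\alpha\bigr\},
\end{equation*}
it suffices to show the following for every fixed $t\ge 0$: there is some $t'\ge 0$ with
\begin{equation*}
\Tr[\min\{t\rho,\sigma\}]-t\alpha\le \frac{1}{s}\Bigl(\sum_{ij}\min(t'P_{ij},Q_{ij})-t'(1-s)\alpha\Bigr).
\end{equation*}
Taking the supremum over $t$ on the left, and bounding the right-hand side by its supremum over $t'$, then gives the claim via \eqref{eq:var_class}.

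The key quantum ingredient I would use is a harmonic-mean upper bound on the quantum Chernoff-type quantity. For positive semidefinite $A=\sum_i a_i\ket{x_i}\bra{x_i}$ and $B=\sum_j b_j\ket{y_j}\bra{y_j}$, the bound is
\begin{equation*}
\Tr[\min\{A,B\}]\le \sum_{ij}\lvert\braket{x_i|y_j}\rvert^2\,\frac{a_ib_j}{a_i+b_j},
\end{equation*}
with the convention $0/0=0$. This is the inequality underlying Cheng et al.'s harmonic-mean bound. To prove it, I would insert the pretty-good test $\Pi=(A+B)^{-1/2}B(A+B)^{-1/2}$ into the minimization defining \eqref{eq:trmin}. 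This gives
\begin{equation*}
\Tr[A\Pi]+\Tr[B(I-\Pi)]=2\Tr\bigl[A(A+B)^{-1/2}B(A+B)^{-1/2}\bigr].
\end{equation*}
I would then bound this trace by $\Tr[A\,\mathcal{J}_{A+B}^{-1}(B)]$-type expressions. Equivalently, I would exploit joint operator concavity of the parallel sum $A:B=(A^{-1}+B^{-1})^{-1}$ and expand in the two eigenbases. Applying this with $A=t\rho$ and $B=\sigma$ gives $a_i=t\lambda_i$ and $b_j=\mu_j$, and hence
\begin{equation*}
\Tr[\min\{t\rho,\sigma\}]\le\sum_{ij}\lvert\braket{x_i|y_j}\rvert^2\,\frac{t\lambda_i\mu_j}{t\lambda_i+\mu_j}.
\end{equation*}
This step, establishing the harmonic-mean inequality with exactly this constant, is the main obstacle. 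If the direct pretty-good-measurement computation only yields the bound with an extra factor of $2$, I would instead use the $(A+B)$-weighted resolution $\sum_{ij}$ of the identity, as in Audenaert et al., to recover the sharp form.

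The remaining step is elementary and classical. I would choose $t'=st/(1-s)$, so that $\tfrac{1}{s}t'(1-s)\alpha=t\alpha$ and the $\alpha$-terms match exactly. It then remains to check the termwise inequality
\begin{equation*}
\frac{ab}{a+b}\le\frac{1}{s}\min\Bigl(\frac{s}{1-s}a,\,b\Bigr)\quad\text{for } a=t\lambda_i,\ b=\mu_j.
\end{equation*}
If the minimum is $b$, the inequality holds because $\tfrac{ab}{a+b}\le b\le b/s$. If the minimum is $\tfrac{s}{1-s}a$, the right-hand side equals $a/(1-s)\ge a\ge \tfrac{ab}{a+b}$.

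Multiplying each termwise inequality by $\lvert\braket{x_i|y_j}\rvert^2$ and summing, using $P_{ij}=\lambda_i\lvert\braket{x_i|y_j}\rvert^2$ and $Q_{ij}=\mu_j\lvert\braket{x_i|y_j}\rvert^2$, gives
\begin{equation*}
\Tr[\min\{t\rho,\sigma\}]\le\frac1s\sum_{ij}\min(t'P_{ij},Q_{ij}).
\end{equation*}
Subtracting $t\alpha=\tfrac1s t'(1-s)\alpha$ from both sides yields the desired per-$t$ inequality, and taking suprema completes the argument.
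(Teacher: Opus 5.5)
The step that fails is your ``key quantum ingredient'': the inequality $\Tr[\min\{A,B\}]\le\sum_{ij}\lvert\braket{x_i|y_j}\rvert^2\frac{a_ib_j}{a_i+b_j}=\text{HM}(A,B)$ is false. The true relation goes the other way, $\text{HM}(A,B)\le\Tr[\min\{A,B\}]$. To see this, take the expansion used in the proof of Theorem~\ref{thm:low} and minimize $b_j|u|^2+a_i|v|^2$ over $u+v=\braket{x_i|y_j}$. This reverse inequality is exactly what gives the harmonic-mean \emph{converse} \eqref{eq:hmbound} via Lemma~\ref{lemma:1}.

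Already $A=B$ is a counterexample: the left-hand side is $\Tr A$, the right-hand side is $\frac12\Tr A$. Any commuting pair with $a_i,b_i>0$ also fails, since $\min(a,b)>\frac{ab}{a+b}$.

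This also rules out your fallback. The equal-weight pretty-good test gives $\Tr[\min\{A,B\}]\le 2\Tr[A(A+B)^{-1/2}B(A+B)^{-1/2}]$. No choice of resolution of the identity can remove the factor $2$, because $A=B$ attains $\Tr[\min\{A,B\}]=2\,\text{HM}(A,B)$. With the factor $2$ in place, your termwise step would require $\frac{2ab}{a+b}\le\min\{\frac{a}{1-s},\frac{b}{s}\}$. This fails for every $s\neq\frac12$: let $b\to\infty$ if $s<\frac12$, or $a\to\infty$ if $s>\frac12$. So, as written, your argument only covers $s=\frac12$.

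Your outer architecture is fine. Your per-$t$ target $\Tr[\min\{t\rho,\sigma\}]\le\frac1s\sum_{ij}\min(t'P_{ij},Q_{ij})$ with $t'=st/(1-s)$ is in fact the right statement. The missing idea is that the test itself must be weighted by $s$. Take $c=s/(1-s)$ and $\Pi=(cA+B)^{-1/2}B(cA+B)^{-1/2}$, with inverses on the support. Cyclicity gives
\begin{equation*}
\Tr[A\Pi]+\Tr[B(I-\Pi)]=\Bigl(1+\frac{1}{c}\Bigr)\Tr\bigl[cA(cA+B)^{-1/2}B(cA+B)^{-1/2}\bigr]=\frac{1}{s}\Tr\bigl[cA(cA+B)^{-1/2}B(cA+B)^{-1/2}\bigr].
\end{equation*}
If the last trace is bounded by $\text{HM}(cA,B)$, you obtain
$\Tr[\min\{A,B\}]\le\sum_{ij}\lvert\braket{x_i|y_j}\rvert^2\frac{a_ib_j}{sa_i+(1-s)b_j}\le\sum_{ij}\lvert\braket{x_i|y_j}\rvert^2\min\{\frac{a_i}{1-s},\frac{b_j}{s}\}$.
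This is exactly what your last step needs. The pretty-good-test estimate $\Tr[X(X+Y)^{-1/2}Y(X+Y)^{-1/2}]\le\text{HM}(X,Y)$ is the genuinely quantum input here. Your sketch (``$\mathcal{J}^{-1}$-type expressions'', parallel sums) does not establish it.

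The paper does not argue per $t$ at all. It cites Cheng et al., whose harmonic-mean achievability bound \eqref{eq:thm42ach} already carries the $\frac{1}{1-s}$ prefactor and the $s\alpha$ rescaling. It then relaxes $\frac{ab}{a+b}\le\min(a,b)$ and relabels $s\leftrightarrow 1-s$. For pure $\rho$, your per-$t$ inequality also follows directly from Lemma~\ref{lemma3} with $(s_1,s_2)=(s,1-s)\in\mathcal{R}$.
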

\begin{IEEEproof}
    See \cite[Th.~4.2, Prop.~6.3]{cheng2026multiple}.
\end{IEEEproof}

Together with Theorem~\ref{thm:low}, this result establishes complementary lower and upper bounds on the quantum hypothesis-testing quantity $\beta_\alpha(\rho,\sigma)$ in terms of its classical counterpart $\beta_\alpha(P,Q)$ associated with the Nussbaum--Szkoła mapping, with both bounds involving a multiplicative factor and a rescaling of the parameter $\alpha$.
However, in contrast to the lower bound from Theorem~\ref{thm:low}, the upper bound in Theorem~\ref{thm_general} can be tightened in the case in
which one of the two alternatives is pure. To this end, we first establish the following lemma.
\begin{lemma}\label{lemma3}
Let $A \succeq 0$ and $B \succeq 0$ have spectral decompositions
\begin{equation}
A=\sum_i a_i\ket{x_i}\bra{x_i},
\qquad
B=b_1\ket{y_1}\bra{y_1}.
\end{equation}
Thus, $B$ has rank $1$. Further define the overlap coefficients
\begin{equation}
w_i\triangleq \left|\braket{x_i|y_1}\right|^2. \label{eq:weights_i}
\end{equation}
Then,
\begin{equation}
\Tr[\min\{A,B\}]
\leq
\sum_i
\min\left(\frac{a_i}{s_1},\frac{b_1}{s_2}\right)w_i, \label{eq:ineqnspq}
\end{equation}
for every $(s_1, s_2) \in \mathcal{R}$, where 
\begin{equation}
\mathcal{R} \triangleq \left\{ (s_1, s_2) \in (0, 1]^2 \;\middle|\; s_1 \leq f(s_2) \right\} \label{eq:region}
\end{equation}
and 
\begin{equation}
    f(s_2) \triangleq \begin{cases} 1, & 0 < s_2 \le \frac{1}{2} \\[4pt] 4s_2(1-s_2), & \frac{1}{2} < s_2 < 1.\end{cases}
\end{equation}
\end{lemma}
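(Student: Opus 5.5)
The plan is to upper-bound $\Tr[\min\{A,B\}]$ in \eqref{eq:trmin} by evaluating it at a cleverly chosen rank-one test aligned with the vector $\ket{y_1}$. This reduces the claim to a scalar inequality that can be checked term by term in $i$.

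\textbf{Step 1 (choice of test).} For coefficients $c_i\in[0,1]$, define
\[
\ket{\psi}\triangleq\sum_i c_i\braket{x_i|y_1}\ket{x_i},\qquad \Pi\triangleq\ket{\psi}\bra{\psi}.
\]
Since $\{\ket{x_i}\}$ is an orthonormal basis, $\sum_i w_i=1$, and hence $\|\psi\|^2=\sum_i c_i^2w_i\le 1$. Therefore $0\preceq\Pi\preceq I$ is admissible in \eqref{eq:trmin}. A direct computation gives
\[
\Tr[A\Pi]=\sum_i a_ic_i^2w_i,\qquad \Tr[B(I-\Pi)]=b_1\Bigl(1-\bigl(\textstyle\sum_i c_iw_i\bigr)^2\Bigr).
\]
To decouple the square, I use $x^2\ge 2\tau x-\tau^2$ for an auxiliary $\tau\in[0,1]$, together with $\sum_i w_i=1$. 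This yields
\[
\Tr[\min\{A,B\}]\le\sum_i w_i\,g_\tau(a_i,c_i),\qquad g_\tau(a,c)\triangleq ac^2-2\tau b_1c+b_1(1+\tau^2).
\]
It then suffices to show, for a suitable $\tau=\tau(s_1,s_2)$, the pointwise bound
\[
\min_{c\in[0,1]}g_\tau(a,c)\le\min\!\left(\frac{a}{s_1},\frac{b_1}{s_2}\right)\quad\text{for all } a\ge0.
\]

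\textbf{Step 2 (the scalar inequality).} The inner minimizer is $c=\min(1,\tau b_1/a)$. This gives $g^\star(a)=a+b_1(1-\tau)^2$ when $a\le\tau b_1$, and $g^\star(a)=b_1(1+\tau^2)-\tau^2b_1^2/a$ when $a\ge\tau b_1$. Note that $g^\star$ is concave and nondecreasing in $a$, with $g^\star(0)=b_1(1-\tau)^2$ and $g^\star(\infty)=b_1(1+\tau^2)$. The right-hand side is the piecewise-linear concave function with kink at $a=s_1b_1/s_2$.

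In the case $s_2\le 1/2$, I take $\tau=1$. Then $g^\star(a)=a$ for $a\le b_1$ and $g^\star(a)=2b_1-b_1^2/a\le\min(a,2b_1)$. Since $s_1\le 1$ and $1/s_2\ge 2$, the bound holds.

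In the case $s_2>1/2$, I need $\tau<1$ so that $g^\star(\infty)=b_1(1+\tau^2)$ does not exceed $b_1/s_2$. This forces $\tau^2\le(1-s_2)/s_2$, and I would take equality. Then I must verify $g^\star(a)\le a/s_1$ on $[0,s_1b_1/s_2]$. The delicate point is the small-$a$ end, since $g^\star(0)=b_1(1-\tau)^2>0$. A linear function $a/s_1$ dominating the concave $g^\star$ amounts to a tangency/discriminant condition. Writing this out, I expect it to reduce exactly to $s_1\le 4s_2(1-s_2)=f(s_2)$.

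\textbf{Main obstacle.} The hard step is the last verification. One must choose $\tau$, and possibly let it depend on both $s_1$ and $s_2$ rather than saturating the $a\to\infty$ constraint, so that both linear pieces dominate $g^\star$ simultaneously. One must also check that the resulting feasibility region is precisely $\mathcal R$ in \eqref{eq:region}.

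My first attempt will be to treat the condition ``$a/s_1\ge g^\star(a)$ for all $a\ge0$'' as a quadratic in $a$ on the branch $a\ge\tau b_1$ and require a nonpositive discriminant. I will also impose the large-$a$ constraint $1+\tau^2\le 1/s_2$, then eliminate $\tau$. If the rank-one test turns out to be too weak near the boundary of $\mathcal R$, the fallback is to allow the inequality $x^2\ge2\tau x-\tau^2$ with $\tau$ optimized jointly. This is equivalent to using the exact value $\bigl(\sum_i c_iw_i\bigr)^2$ together with a common scaling of the $c_i$.
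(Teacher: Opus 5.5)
Your Step 1 bound is valid, and for $s_2\le\frac12$ your argument is complete. With $\tau=1$ you get $g^\star(a)\le\min(a,2b_1)$, i.e.\ $\Tr[\min\{A,B\}]\le\sum_i w_i\min(a_i,2b_1)$. This dominates the right-hand side of \eqref{eq:ineqnspq} whenever $s_1\le 1$ and $s_2\le\frac12$, and it is a simpler route than the paper's for that part of $\mathcal{R}$.

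The gap is the remaining part, $\frac12<s_2<1$ with $s_1\le 4s_2(1-s_2)$. This is where the lemma has its content: the constant $4/3$ at $s_1=s_2=\frac34$, and the curve used in \eqref{eq:betalphaupper}. There, your scalar inequality is infeasible for every $\tau$ and every $s_1>0$. At $a=0$ the right-hand side vanishes while $g^\star(0)=b_1(1-\tau)^2$, which forces $\tau=1$. But then $g^\star(a)\to 2b_1>b_1/s_2$ as $a\to\infty$. This is not an artifact of asking for a pointwise bound. The commuting instances $w_1=1$ with $a_1=0$ (for $\tau<1$), or with $a_1\to\infty$ (for $\tau=1$), make your intermediate bound $\sum_i w_i g^\star(a_i)$ itself exceed the right-hand side. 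So no tangency or discriminant condition can reduce to $s_1\le 4s_2(1-s_2)$: the intercept you flag as delicate is fatal.

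Optimizing $\tau$ jointly does not rescue this, because Step 1 already loses too much by capping $c_i\le 1$. Take $a_1=0$, $a_2=M$, $w_1=W$, $w_2=1-W$, and let $M\to\infty$. Over your tests, the smallest value of $\Tr[A\Pi]+\Tr[B(I-\Pi)]$ tends to $b_1(1-W^2)$; this is exactly what jointly optimizing $\tau$ recovers. Meanwhile the right-hand side tends to $b_1(1-W)/s_2$, and the true $\Tr[\min\{A,B\}]$ tends to $b_1(1-W)$. So for any $s_2>\frac12$ and any $W>1/s_2-1$ your family cannot reach the target. For example, $s_1=s_2=\frac34$ and $W=\frac12$ give $\frac34>\frac23$.

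The good test in this example is the normalized vector $\ket{x_1}$, i.e.\ $c_1=1/\sqrt{W}>1$. In general the optimal test is the normalized negative eigenvector of $A-B$, with $c_i\propto(a_i-\lambda)^{-1}$. Its norm constraint $\sum_i c_i^2 w_i=1$ couples the coordinates, so any term-by-term comparison breaks down. What is missing is a global treatment of this coupled problem.

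The paper supplies it as follows. Since $A-B$ is a rank-one perturbation, $\Tr[\min\{A,B\}]=b_1+\min(\lambda,0)$, with $\lambda$ determined by the secular equation $b_1\sum_i w_i/(a_i-\lambda)=1$. The paper then minimizes the ratio of the two sides over $(\bm{a},\bm{w},\lambda)$. Normalizing $b_1=1$, the extremal configuration puts weight $W_1$ at $a=W_1$ and the rest at $a=\infty$, with $\lambda=0$. This gives $\min_{W_1}\{W_1^2/s_1+(1-W_1)/s_2\}=(4s_2-s_1)/(4s_2^2)$, which is at least $1$ exactly when $s_1\le 4s_2(1-s_2)$.
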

\begin{IEEEproof}
    See Appendix~\ref{appendixA}.
\end{IEEEproof}

\begin{figure}[tb]
    \centering
    \includegraphics[width=0.48\linewidth]{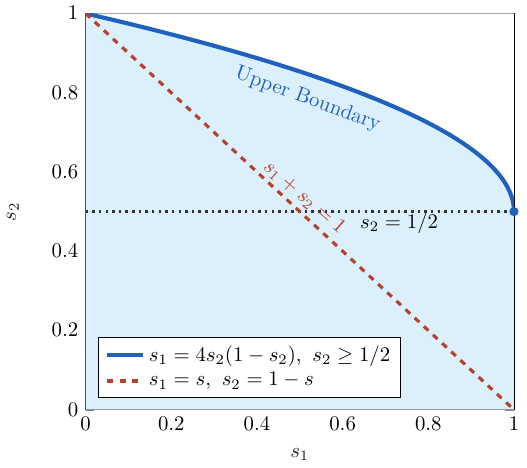}
    \caption{Illustration of the region $\mathcal{R}$ \eqref{eq:region}, shaded in the figure.}
    \label{fig:Region}
\end{figure}

Figure~\ref{fig:Region} shows the region $\mathcal{R}$ defined in \eqref{eq:region}. Setting $s_1 = s_2 = s$ in \eqref{eq:ineqnspq} yields the multiplicative bound
    \begin{equation}
        \operatorname{Tr}[\min\{A,B\}] \le \frac{1}{s} \sum_i \min(a_i, 1) w_i,
    \end{equation}
provided that $(s,s) \in \mathcal{R}$.
    The tightest of such bound is achieved at the intersection $s = 4s(1-s)$, which gives $s = 3/4$. This prefactor cannot be tightened further. To see this, consider the example
    \begin{equation}
        A = \biggl(\gamma - \frac{1}{2}\biggr)\ket{+}\bra{+} + \frac{1}{2} \ket{-}\bra{-}, \qquad B = \ket{0}\bra{0}. \label{eq:ABexample}
    \end{equation}
    In this case, it follows that
    \begin{equation}
        \lim_{\gamma \to \infty} \frac{\operatorname{Tr}[\min\{A,B\}]}{\sum_i \min(a_i,b_1)w_i} = \frac{4}{3}.
    \end{equation}
    This example is precisely the one considered in~\cite[Eq.~(67)]{cheng2026multiple}, where the authors say \textit{``It is, however, not clear if the factor $4/3$ is universal.''} 
    Lemma~\ref{lemma3} demonstrates that the limiting ratio of this example is not an isolated instance, and that, for any pure state $\rho$ and mixed state $\sigma$, the inequality
    \begin{equation}
        \operatorname{Tr}[\min\{t \rho,\sigma\}] \le \frac{4}{3} \sum_{i,j} \min(t P_{ij}, Q_{ij}), \quad t \geq 0,
    \end{equation}
    holds universally.

A direct application of the variational formulation from Lemma~\ref{lemma:1} together with Lemma~\ref{lemma3} yields an upper bound on the quantum error trade-off:
\begin{theorem}\label{thm_pure}
 Let $\rho,\sigma \in \mathcal{D}(\mathcal{H})$, and let $\rho$ be a pure state. Then, the quantum error trade-off $\beta_{\alpha}(\rho, \sigma)$ is upper-bounded as
\begin{equation}
    \beta_{\alpha} (\rho, \sigma) \leq \frac{1}{s_1}\beta_{s_2\alpha}(P, Q)
\end{equation}
for all $(s_1, s_2) \in \mathcal{R}$ defined in \eqref{eq:region}.
\end{theorem}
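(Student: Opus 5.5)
The plan is to combine the variational formula of Lemma~\ref{lemma:1} with Lemma~\ref{lemma3}. The one subtlety is orientation: in Lemma~\ref{lemma3} the rank-one operator is $B$, and here the pure state is $\rho$. So I first use that $\Tr[\min\{A,B\}]$ is symmetric in its arguments. This holds because the minimization in \eqref{eq:trmin} is invariant under $A\leftrightarrow B$ via $\Pi\leftrightarrow \mathds{1}-\Pi$; equivalently, $\Tr[\min\{A,B\}]=\tfrac12(\Tr A+\Tr B-\|A-B\|_1)$. Write $\rho=\ket{x_1}\bra{x_1}$ and $\sigma=\sum_j\mu_j\ket{y_j}\bra{y_j}$. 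For fixed $t\ge 0$, I then apply Lemma~\ref{lemma3} with $A=\sigma$ (eigenvalues $a_j=\mu_j$, eigenvectors $\ket{y_j}$) and $B=t\rho$ ($b_1=t$, eigenvector $\ket{x_1}$). For every $(s_1,s_2)\in\mathcal{R}$ this gives
\begin{equation*}
\Tr[\min\{t\rho,\sigma\}]\le \sum_j \min\Bigl(\frac{\mu_j}{s_1},\frac{t}{s_2}\Bigr)\lvert\braket{x_1|y_j}\rvert^2 .
\end{equation*}

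Next I identify the right-hand side with a classical sum over the Nussbaum--Szko{\l}a distributions \eqref{eq:Ns-mapping}. Since $\lambda_1=1$ and $\lambda_i=0$ for $i\neq 1$, we have $P_{1j}=\lvert\braket{x_1|y_j}\rvert^2$, $Q_{1j}=\mu_j\lvert\braket{x_1|y_j}\rvert^2$, and $P_{ij}=0$ for $i\ne 1$. Hence the terms with $i\neq1$ contribute $\min(tP_{ij},Q_{ij})=0$ to any classical $\min$-sum, and the bound becomes
\begin{equation*}
\Tr[\min\{t\rho,\sigma\}]\le \sum_{ij}\min\Bigl(\frac{Q_{ij}}{s_1},\frac{tP_{ij}}{s_2}\Bigr)=\frac{1}{s_1}\sum_{ij}\min\bigl(t'P_{ij},Q_{ij}\bigr),\qquad t'\triangleq \frac{s_1}{s_2}t .
\end{equation*}

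Finally I substitute this into the first line of \eqref{eq:var_min}. Since $t\alpha=\frac{1}{s_1}t' s_2\alpha$,
\begin{equation*}
\beta_\alpha(\rho,\sigma)\le \sup_{t\ge0}\frac{1}{s_1}\Bigl\{\sum_{ij}\min(t'P_{ij},Q_{ij})-t's_2\alpha\Bigr\}.
\end{equation*}
As $t$ ranges over $[0,\infty)$, so does $t'$ (because $s_1,s_2>0$), so by \eqref{eq:var_class} the supremum equals $\frac{1}{s_1}\beta_{s_2\alpha}(P,Q)$.

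The main point to get right is the role assignment in Lemma~\ref{lemma3}: the mixed state's eigenvalues must be paired with $s_1$ and the pure state with $s_2$, so that the prefactor is $1/s_1$ and the rescaled type-I error is $s_2\alpha$. Apart from that, the argument reduces to checking the symmetry of $\Tr[\min\{\cdot,\cdot\}]$ and the change of variable $t\mapsto t'$.
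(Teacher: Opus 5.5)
Your proposal is correct and follows the paper's proof essentially verbatim: apply Lemma~\ref{lemma3} with $A=\sigma$ and $B=t\rho$, rescale $t'=(s_1/s_2)t$, and take the supremum using \eqref{eq:var_class}. You also make explicit two points the paper leaves implicit, namely the symmetry $\Tr[\min\{A,B\}]=\Tr[\min\{B,A\}]$ and the fact that the $i\neq 1$ terms vanish in the Nussbaum--Szko{\l}a sum.
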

\begin{IEEEproof}
Fix $t \ge 0$. Invoking Lemma~3 with $A = \sigma$ and $B = t \rho$, and subtracting $t \alpha$ from both sides, we have
\begin{IEEEeqnarray}{lCl}
\Tr[\min\{t\rho, \sigma\}] - t\alpha &\le &\sum_{ij} \min\left(\frac{t P_{ij}}{s_2}, \frac{Q_{ij}}{s_1}\right) - t\alpha \nonumber\\
&= &\frac{1}{s_1} \biggl\{ \sum_{ij} \min\left(t' P_{ij}, Q_{ij}\right) - s_2 t'\alpha \biggr\}, \label{eq:lemma2_beta}
\end{IEEEeqnarray}
where $t' \triangleq (s_1 / s_2) t$. Since $s_1, s_2 > 0$, the mapping $t \mapsto t'$ is a bijection on $[0, \infty)$. Taking the supremum over $t \ge 0$ on both sides yields
\begin{IEEEeqnarray}{lCl}
\beta_{\alpha}(\rho, \sigma) &\le &\frac{1}{s_1} \sup_{t' \ge 0} \biggl\{ \sum_{ij} \min\left(t' P_{ij}, Q_{ij}\right) - s_2 t'\alpha \biggr\},
\end{IEEEeqnarray}
which, by \eqref{eq:var_class}, completes the proof.
\end{IEEEproof}

The feasible region $\mathcal{R}$, illustrated in Figure~\ref{fig:Region}, provides a clear geometric link between the bounds from Theorems~\ref{thm_general} and~\ref{thm_pure}. Indeed, the line segment given by $s_1 = s$ and $s_2 = 1-s$ for $s \in (0,1)$ is entirely contained within $\mathcal{R}$. Consequently, relaxing the optimal bound of Theorem~\ref{thm_pure} along this parameter choice yields a direct, alternative proof of Theorem~\ref{thm_general} for the pure--mixed regime. Moreover, optimizing the bound in Theorem 3 over the boundary of $\mathcal{R}$, i.e., $s_1 = 4 s_2 (1-s_2)$, yields tighter results in the pure-mixed regime.

\section{Asymptotic Analysis}
\label{sec:asymp}
In this section, we show that the lower and upper bounds in Theorems~\ref{thm:low} and \ref{thm_general}, with an adequate choice of the parameter $s$, provide a unified and conceptually simple framework for recovering asymptotic results in the small-deviations (second-order), moderate-deviations, and large-deviations regimes. To this end, we extend the hypothesis test from single copies $\rho, \sigma$ to $n$-fold tensor products $\rho^{\otimes n}, \sigma^{\otimes n}$.
Within this section, we assume $\supp(\rho) \subseteq \supp(\sigma)$ such that all the relevant quantities in \eqref{eq:in}-\eqref{eq:fin} are bounded. Finally, we define the hypothesis testing relative entropy
\begin{IEEEeqnarray}{lCl}
    D_h^{\varepsilon}(\rho^{\otimes n} \| \sigma^{\otimes n}) &\triangleq & 
    -\log\beta_{\varepsilon}\bigl(\rho^{\otimes n}, \sigma^{\otimes n}\bigr). \label{eq:hrel}
\end{IEEEeqnarray}
For the classical Nussbaum-Szkoła distributions $P$ and $Q$, $D_h^{\varepsilon}(P^n \| Q^n)$ is defined analogously.
\subsection{Small Deviations}
First, we examine the asymptotic behavior of the quantum error trade-off subject to a fixed constraint $\varepsilon$ on the type I error probability. This setup corresponds to the small-deviations regime, also referred to as second-order asymptotics. In the following corollary, we use Theorems~\ref{thm:low} and~\ref{thm_general} to establish a precise asymptotic expansion up to the second-order term.
\begin{corollary}\label{thm:small-deviations}
    Let $V(P \| Q) > 0$. Then,
    \begin{equation}
        D_h^{\varepsilon}(\rho^{\otimes n}  \|  \sigma^{\otimes n}) =  nD(\rho  \|  \sigma) + \sqrt{n V(\rho  \|  \sigma)} \Phi^{-1}\left( \varepsilon \right) + O(\log{n}), \label{eq:prop1}
    \end{equation}
    where $\Phi(x)$ denotes the cumulative distribution function of the standard normal distribution and $O(\log n)$ summarizes terms of order $\log n$.
\end{corollary}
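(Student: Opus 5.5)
The plan is to sandwich the quantum quantity between two classical quantities for the i.i.d. Nussbaum--Szkoła pair $P^{\otimes n},Q^{\otimes n}$, and then invoke the classical second-order expansion. Because the mapping tensorizes, Theorem~\ref{thm:low} applied to $\rho^{\otimes n},\sigma^{\otimes n}$ gives
\begin{equation*}
D_h^{\varepsilon}(\rho^{\otimes n}\|\sigma^{\otimes n}) \le D_h^{\varepsilon/(1-s_n)}(P^{n}\|Q^{n}) - \log s_n ,
\end{equation*}
and Theorem~\ref{thm_general} gives
\begin{equation*}
D_h^{\varepsilon}(\rho^{\otimes n}\|\sigma^{\otimes n}) \ge D_h^{(1-s_n)\varepsilon}(P^{n}\|Q^{n}) + \log s_n .
\end{equation*}
Both hold for any sequence $s_n\in(0,1)$. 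In the first bound we additionally need $\varepsilon\le 1-s_n$.

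For the converse (first bound), I will choose $s_n = 1-1/\sqrt{n}$. Then $\varepsilon/(1-s_n)=\sqrt{n}\,\varepsilon$, which exceeds $1$. That rescaling is the wrong direction, so this choice does not work. Instead I will take $1-s_n$ close to $1$, i.e. $s_n = 1/\sqrt{n}$. This gives:
\begin{itemize}
\item the penalty $-\log s_n=\tfrac12\log n$;
\item the modified type-I level $\varepsilon/(1-1/\sqrt{n}) = \varepsilon + O(1/\sqrt{n})$.
\end{itemize}
For the achievability (second bound), the same choice $s_n=1/\sqrt{n}$ gives the level $(1-1/\sqrt{n})\varepsilon=\varepsilon-O(1/\sqrt n)$ and a penalty of $\tfrac12\log n$. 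In both directions, therefore, the cost is $O(\log n)$ plus a perturbation of order $1/\sqrt n$ in the type-I level.

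Next I invoke the classical second-order result for i.i.d. distributions with finite third absolute moment (guaranteed by $\supp(\rho)\subseteq\supp(\sigma)$ and the finite alphabet). For $\varepsilon_n\in(0,1)$ bounded away from $0$ and $1$,
\begin{equation*}
D_h^{\varepsilon_n}(P^n\|Q^n)=nD(P\|Q)+\sqrt{nV(P\|Q)}\,\Phi^{-1}(\varepsilon_n)+O(\log n),
\end{equation*}
uniformly in $\varepsilon_n$ on compact subsets. This follows from Berry--Esseen together with the Neyman--Pearson bounds, or from Strassen's expansion. Because $V(P\|Q)>0$, $\Phi^{-1}$ is smooth near $\varepsilon$. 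A Taylor expansion then gives $\sqrt{n}\,\Phi^{-1}(\varepsilon\pm O(n^{-1/2}))=\sqrt n\,\Phi^{-1}(\varepsilon)+O(1)$. Finally, \eqref{eq:in} and \eqref{eq:relvariance} replace $D(P\|Q),V(P\|Q)$ by $D(\rho\|\sigma),V(\rho\|\sigma)$, and combining the two sandwiches yields \eqref{eq:prop1}.

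The main obstacle is the uniformity of the classical expansion's $O(\log n)$ remainder under the $n$-dependent level $\varepsilon_n$. I would handle it by reproving the needed one-sided bounds directly from Berry--Esseen with explicit constants. For the upper bound on $D_h$, use the standard information-spectrum converse $\beta_{\varepsilon}\ge e^{-\gamma}(\Pr[\imath>\gamma]-\varepsilon)$ with $\gamma$ chosen so that the gap is $1/\sqrt n$. For the lower bound, use a Neyman--Pearson threshold test combined with the Berry--Esseen-type refinement $\Pr[\gamma\le \imath\le\gamma+\log n\ldots]$ of Polyanskiy--Poor--Verdú. These hold for all $\varepsilon_n$ in a fixed compact subset of $(0,1)$, which is exactly what the argument needs.
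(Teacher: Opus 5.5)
Your proposal is correct and is the paper's proof. The converse uses Theorem~\ref{thm:low} with $s_n=1/\sqrt{n}$. The achievability uses Theorem~\ref{thm_general} with the same effective level $(1-1/\sqrt{n})\varepsilon$ and $\tfrac12\log n$ penalty; the paper writes this as $s_n=1-1/\sqrt{n}$ in the relabeled form $\frac{1}{1-s}\beta_{s\alpha}$. Both directions then apply the classical second-order expansion and a first-order Taylor expansion of $\Phi^{-1}$.

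The differences are minor. The paper cites the sharper classical expansion with $\tfrac12\log n+O(1)$, so its achievability remainder is $O(1)$, and it applies that expansion at an $n$-dependent level without comment, whereas you justify the needed uniformity via Berry--Esseen. One slip in that sketch: with $\imath=\log\frac{P}{Q}$ under $P$, the information-spectrum converse should read $\beta_\varepsilon\ge e^{-\gamma}(\Pr[\imath\le\gamma]-\varepsilon)$, not with $\Pr[\imath>\gamma]$.
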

\begin{IEEEproof}
Taking logarithms on both sides of \eqref{eq:thmlow}, and replacing $s$ by $s_n$ to indicate that this parameter may depend on $n$, we obtain from Theorem~\ref{thm:low} that
\begin{equation}
    D_h^{\varepsilon}(\rho^{\otimes n}  \|  \sigma^{\otimes n}) \leq D_h^{\frac{\varepsilon}{1-s_n}}(P^n  \|  Q^n) -\log{s_n}.\label{eq:thm1rel}
\end{equation}
From the small-deviations asymptotic behavior of the classical hypothesis testing problem \cite[Prop.~2.3]{tan2014asymptotic}, we have
\begin{equation}
    D_h^{\varepsilon}(P^n  \|  Q^n) 
    = nD(P  \|  Q) + \sqrt{n V(P  \|  Q)} \Phi^{-1}\left( \varepsilon \right)
    + \frac{1}{2}\log{n}  + O(1). \label{eq:prop23}
\end{equation}
Using \eqref{eq:prop23} in \eqref{eq:thm1rel}, and setting $s_n = 1/{\sqrt{n}}$, we obtain that
\begin{equation}
    D_h^{\varepsilon}(\rho^{\otimes n} \| \sigma^{\otimes n}) \leq nD(P \| Q) 
     + \sqrt{n V(P  \|  Q)} \Phi^{-1}\left( \frac{\varepsilon}{1-n^{-1/2}} \right)
    + \log{n} + O(1). \label{eq:relentuppsmall1}
\end{equation}
Performing a first-order Taylor expansion of $\Phi^{-1}( \frac{\varepsilon}{1-n^{-1/2}})$ around $\varepsilon$, and grouping terms of order $\log n$ and $1$ in one $O(\log n)$-term, proves the converse.

For the achievability, we use Theorem~\ref{thm_general} to obtain
\begin{equation}
    D_h^{\varepsilon}(\rho^{\otimes n}  \|  \sigma^{\otimes n}) \geq D_h^{s_n\varepsilon}(P^n  \|  Q^n) -\log{\frac{1}{1-s_n}}.\label{eq:thm_purerel}
\end{equation}
Again, from the classical small deviations result in \cite[Prop.~2.3]{tan2014asymptotic}, \eqref{eq:thm_purerel} becomes
\begin{equation}
D_h^{\varepsilon}(\rho^{\otimes n} \| \sigma^{\otimes n}) \geq nD(P \| Q) 
     + \sqrt{n V(P  \|  Q)} \Phi^{-1}\left( s_n \varepsilon \right)
    + \frac{1}{2}\log{n} +\log{\left( 1 - s_n \right)}+ O(1).
\end{equation}
Next, we choose $s_n = 1 - 1/\sqrt{n}$ and perform a first order Taylor expansion of $\Phi^{-1}(\cdot)$ around $\varepsilon$, i.e.,
\begin{IEEEeqnarray}{lCl}
    \Phi^{-1}\left(\frac{\sqrt{n}-1}{\sqrt{n}}\varepsilon \right)
    &= &\Phi^{-1} (\varepsilon) + O\left(\frac{1}{\sqrt{n}}\right),
\end{IEEEeqnarray}
where the terms in $O(1/\sqrt{n})$ are grouped into $O(1)$ after being multiplied by the prefactor $\sqrt{n}$ in front of the dispersion term in \eqref{eq:thm_purerel}. After substituting $s_n = 1 - 1/\sqrt{n}$ in $\log{(1-s_n)}$, \eqref{eq:thm_purerel} simplifies to 
\begin{equation}
    D_h^{\varepsilon} ( \rho^{\otimes n} \| \sigma^{\otimes n}) \geq nD(P  \|  Q) + \sqrt{n V(P  \|  Q)} \Phi^{-1}\left( \varepsilon \right) + O(1). \label{eq:smallachiev}
\end{equation}
Combining \eqref{eq:relentuppsmall1} and \eqref{eq:smallachiev}, and using \eqref{eq:in} and \eqref{eq:relvariance}, we obtain \eqref{eq:prop1}.
\end{IEEEproof}

\subsection{Moderate Deviations}
The moderate deviations regime considers the scenario where, rather than imposing a fixed constraint $\varepsilon$ on the type~I error probability, the error probability decays subexponentially with $n$ as $\varepsilon_n = e^{-na_n^2}$, where $\{a_n\}$ is a moderate sequence defined below. As before, Theorems~\ref{thm:low} and~\ref{thm_general} enable us to establish the exact asymptotic expansion for this regime up to the second-order term.

\begin{corollary}\label{thm:moderate-deviations}
Let $\{a_n\}$ be a nonnegative sequence satisfying $a_n \to 0$ and
$n a_n^2 \to\infty$ as $n\to\infty$. Further let $\varepsilon_n = e^{-n a_n^2}$. Then,
\begin{equation} \label{eq:moderate}
D_h^{\varepsilon_n}(\rho^{\otimes n} \| \sigma^{\otimes n}) = nD(\rho\|\sigma) - \sqrt{2 V(\rho\|\sigma)} n a_n + o(n a_n)
\end{equation}
where $o(na_n)$ summarizes terms that grow more slowly than $na_n$ as $n\to\infty$.
\end{corollary}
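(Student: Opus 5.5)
The proof mirrors that of Corollary~\ref{thm:small-deviations}: sandwich the quantum quantity between classical hypothesis-testing quantities for the i.i.d.\ Nussbaum--Szkoła distributions $P^n,Q^n$. Then invoke the classical moderate-deviations expansion
\begin{equation}
D_h^{\epsilon_n}(P^n\|Q^n) = nD(P\|Q) - \sqrt{2V(P\|Q)}\, n a_n + o(na_n),
\end{equation}
valid for any $\epsilon_n=e^{-nb_n^2}$ with $b_n$ moderate and $b_n/a_n\to1$ (see, e.g., Sason or Chubb \emph{et al.}). Finally translate back using \eqref{eq:in} and \eqref{eq:relvariance}.

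\textbf{Converse.} From Theorem~\ref{thm:low} with $s=s_n$ we get, as in \eqref{eq:thm1rel},
\begin{equation}
D_h^{\varepsilon_n}(\rho^{\otimes n}\|\sigma^{\otimes n}) \le D_h^{\varepsilon_n/(1-s_n)}(P^n\|Q^n) - \log s_n .
\end{equation}
The choice $s_n$ must satisfy two requirements:
\begin{itemize}
\item the condition $\varepsilon_n\le 1-s_n$, and
\item $-\log s_n = o(na_n)$.
\end{itemize}
I would take $s_n=1/2$, or $s_n=e^{-\sqrt{n}a_n}$ if one prefers $s_n\to0$; both are valid. With $s_n=1/2$, the classical constraint becomes $2\varepsilon_n=e^{-n\tilde a_n^2}$ with $\tilde a_n^2=a_n^2-\tfrac{\log 2}{n}$. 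Since $na_n^2\to\infty$, we have $\tilde a_n/a_n\to1$, so $n\tilde a_n=na_n+o(na_n)$. The constant $\log 2$ is also $o(na_n)$, because $na_n\ge\sqrt{n}\cdot\sqrt{n}a_n\to\infty$.

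\textbf{Achievability.} From Theorem~\ref{thm_general},
\begin{equation}
D_h^{\varepsilon_n}(\rho^{\otimes n}\|\sigma^{\otimes n}) \ge D_h^{s_n\varepsilon_n}(P^n\|Q^n) + \log s_n .
\end{equation}
Choosing $s_n=1/2$ again gives the constraint $\varepsilon_n/2=e^{-n\hat a_n^2}$ with $\hat a_n^2=a_n^2+\tfrac{\log2}{n}$, and hence $n\hat a_n=na_n(1+o(1))$. The classical moderate-deviations lower bound then yields the matching expansion. Combining both directions and using $D(P\|Q)=D(\rho\|\sigma)$ and $V(P\|Q)=V(\rho\|\sigma)$ gives \eqref{eq:moderate}.

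\textbf{Main obstacle.} The key step is having a classical moderate-deviations result for $D_h^{\epsilon}(P^n\|Q^n)$ in exactly this two-sided form, robust to perturbing the sequence $a_n$. Citing Sason's result (achievability) and a converse Hoeffding/Rozovsky-type bound (converse) should suffice, since $P,Q$ are finite-alphabet and $\log(P/Q)$ has finite moments under $\supp(\rho)\subseteq\supp(\sigma)$. The case $V=0$ is trivial.

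If needed, one can instead derive the classical statement directly. Write $\beta_{\epsilon}(P^n,Q^n)$ via the Neyman--Pearson threshold $nD-\sqrt{2V}\,na_n(1+o(1))$. Then apply the moderate-deviation principle for the i.i.d.\ sum of log-likelihood ratios, which gives $\Pr[\sum \imath \le nD - x_n]\approx e^{-x_n^2/(2nV)}$ for $\sqrt{n}\ll x_n\ll n$. Finally, convert between measures via the standard change-of-measure inequality, which costs only $O(\log n)$ or $o(na_n)$ terms.
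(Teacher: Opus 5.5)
Your proposal is correct and follows the paper's route: Theorem~\ref{thm:low} and Theorem~\ref{thm_general} with a fixed $s$ sandwich $D_h^{\varepsilon_n}(\rho^{\otimes n}\|\sigma^{\otimes n})$ between classical Nussbaum--Szko\l{}a quantities at rescaled levels, and both the rescaling and the $\log s$ term cost only $o(na_n)$. One small slip: Theorem~\ref{thm_general} gives $D_h^{(1-s_n)\varepsilon_n}(P^n\|Q^n)+\log s_n$ in general, which coincides with your displayed achievability inequality only at your choice $s_n=1/2$.

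The one real difference is that you import the classical moderate-deviations expansion as a black box, absorbing the rescaling via $\tilde a_n/a_n\to1$. The paper instead derives both classical estimates inline: Lemma~\ref{lemma:1} with $t=2^{-R_n}$ plus Rozovsky's tail lower bound for the converse, and the threshold test with $Q^n(\cdot)\le 2^{-R_n}P^n(\cdot)$ plus Bernstein's inequality for achievability. If you cite rather than derive, use a genuinely classical source such as your MDP sketch, since Chubb et al.\ already prove the quantum statement and would make the sandwich superfluous.
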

\begin{IEEEproof}
We begin by upper-bounding $D_h^{\varepsilon_n}(\rho^{\otimes n} \| \sigma^{\otimes n})$ using Theorem~1 and Lemma~1
\begin{IEEEeqnarray}{lCl}
D_h^{\varepsilon_n}(\rho^{\otimes n} \| \sigma^{\otimes n}) & \leq & - \log\left(Q^n\left(\log\frac{P^n}{Q^n} > R_n\right) + 2^{-R_n}\left( P^n\left(\log\frac{P^n}{Q^n} \leq R_n\right)-\frac{1}{1-s}\varepsilon_n\right)\right) - \log s \nonumber\\
& \leq & - \log\left(2^{-R_n}\left( P^n\left(\log\frac{P^n}{Q^n} \leq R_n\right)-\frac{1}{1-s}\varepsilon_n\right)\right) - \log s \nonumber\\
& = & R_n - \log\left(P^n\left(\log\frac{P^n}{Q^n} \leq R_n\right)-\frac{1}{1-s}\varepsilon_n\right) - \log s ,\label{eq:UB}
\end{IEEEeqnarray}
where the second step follows from the monotonicity of
the logarithm function. For arbitrary choices of $R_n$ and $s < 1/2$, we next set
\begin{equation}
\label{eq:R_n_def}
R_n = n D(P\|Q) - n a_n \lambda_n
\end{equation}
for some $\lambda_n$ that we shall define in the following.
Specifically, we will choose $\lambda_n$ such that
\begin{equation}
\label{eq:LB_P}
P^n\left(\log\frac{P^n}{Q^n} \leq R_n\right) \geq \frac{1}{1-2s} \varepsilon_n.
\end{equation}
In this case, the last two terms on the right-hand side of \eqref{eq:UB} are bounded by
\begin{IEEEeqnarray}{lCl}
 -\log\left(P^n\left(\log\frac{P^n}{Q^n} \leq R_n\right)-\frac{1}{1-s}\varepsilon_n\right) -\log s & \leq & -\log\left(\frac{1}{1-2s}\varepsilon_n - \frac{1}{1-s} \varepsilon_n\right) - \log s \nonumber\\
 & = & -\log \varepsilon_n + \log\frac{s}{(1-2s)(1-s)} - \log s \nonumber\\
 & = & n a_n^2 +O(1)\nonumber\\
 & = & o(n a_n) \label{eq:forget_logP}
\end{IEEEeqnarray}
where the last two step follow because $\varepsilon_n = e^{-n a_n^2}$ for $\{a_n\}$ satisfying $a_n\to 0$ and $n a_n^2\to\infty$, so $n a_n^2$ and $O(1)$ are both $o(n a_n)$ since $\frac{n a_n^2}{n a_n}=a_n \to 0$ and $n a_n = \sqrt{n}\sqrt{n a_n^2}\to\infty$.

We prove \eqref{eq:LB_P} using the following result by Rozovsky:
\begin{lemma}[\hspace{-0.01mm}{\cite[Th.~B2]{rozovsky2002estimate}}]\label{lem:rozovsky}\label{rozovsky}
Let $X_1,\ldots,X_n$ be independent random variables with finite third moments
\begin{equation}
\mu_k = \E{X_k}, \quad \sigma_k^2 = \Var{X_k}, \quad t_k = \E{|X_k-\mu_k|^3}.
\end{equation}
Let $S= \sigma_1^2+\ldots+\sigma_n^2$ and $T=t_1+\ldots+t_n$, and let $A_1>0$ and $A_2>0$ be universal constants. Then, for every $x\geq 1$, we have that
\begin{equation}
\Prob\left(\sum_{k=1}^n (X_k - \mu_k) > x \sqrt{S}\right) \geq Q(x) e^{-\frac{A_1 T}{S^{3/2}}x^3}\left(1-\frac{A_2 T}{S^{3/2}}x\right).
\end{equation}
\end{lemma}
To apply Lemma~\ref{rozovsky}, we set
\begin{IEEEeqnarray}{lCl}
X_k & = & -\log\frac{P}{Q}, \\
\mu_k & = & - D(P\|Q), \\
\sigma_k^2 & = & V(P\|Q), \\
t_k & = & T(P\|Q).
\end{IEEEeqnarray}
We can then write the probability on the left-hand side of \eqref{eq:LB_P} as
\begin{IEEEeqnarray}{lCl}
P^n\left(\log\frac{P^n}{Q^n} \leq R_n\right)
& = & \Prob\left(\sum_{k=1}^n (X_k-\mu_k) \geq -R_n - n \mu_k\right) \nonumber\\
& = &  \Prob\left(\sum_{k=1}^n (X_k-\mu_k) \geq \lambda_n n a_n\right)
\end{IEEEeqnarray}
where the last step follows from \eqref{eq:R_n_def}. We next argue that \eqref{eq:LB_P} follows by setting
\begin{equation}
\lambda_n^2 = 2 V(P\|Q)\left(1-\frac{B_n}{n a_n^2} - \frac{C_n}{n a_n^2}+ \frac{\log(1-2 s)}{n a_n^2}\right)
\end{equation}
where
\begin{IEEEeqnarray}{lCl}
B_n & \triangleq & \frac{\sqrt{8}A_1T(P\|Q)}{V^{3/2}(P\|Q)}n a_n^3 - \log\left(1-\frac{\sqrt{2}A_2 T(P\|Q)}{V(P\|Q)}a_n\right) \\
C_n & \triangleq & \frac{1}{2}\log(4\pi n a_n^2) - \log\left(1-\frac{1}{n a_n^2}\right).
\end{IEEEeqnarray}
For future reference, note that $B_n$ and $C_n$ are nonnegative and of order $o(n a_n^2)$. Consequently,
\begin{equation}
\label{eq:lambda_asymp}
\lambda_n^2 = 2 V(P\|Q) + o(1)
\end{equation}
which in turn implies that, for sufficiently large $n$, $V(P\|Q) \leq \lambda_n^2 \leq 2 V(P\|Q)$.

To show that the above $\lambda_n$ yields \eqref{eq:LB_P}, we use Lemma~\ref{rozovsky} together with $S = n V(P\|Q)$ and $T=n T(P\|Q)$ to lower-bound the probability on the left-hand side of \eqref{eq:LB_P} as (cf.~\cite[Eq.~(75)]{polyanskiy2010channel})
\begin{equation}
\label{eq:LB_p1}
P^n\left(\log\frac{P^n}{Q^n} \leq R_n\right) \geq Q\left(\frac{\lambda_n}{\sqrt{V(P\|Q)}}\sqrt{n a_n^2}\right) e^{- \frac{\lambda_n^3 A_1 T(P\|Q)}{V^3(P\|Q)}n a_n^3}\left(1-\frac{\lambda_n A_2 T(P\|Q)}{V^2(P\|Q)}a_n\right).
\end{equation}
Taking logarithms on both sides of \eqref{eq:LB_p1}, and using the lower bound $Q(x) \geq 1/\sqrt{2\pi x^2} e^{-x^2/2}(1-x^{-2})$, $x>0$, we then obtain that
\begin{IEEEeqnarray}{lCl}
\log P^n\left(\log\frac{P^n}{Q^n} \leq R_n\right) & \geq & \log Q\left(\frac{\lambda_n}{\sqrt{V(P\|Q)}}\sqrt{n a_n^2}\right) -  \frac{\lambda_n^3 A_1 T(P\|Q)}{V^3(P\|Q)}n a_n^3  + \log\left(1-\frac{\lambda_n A_2 T(P\|Q)}{V^2(P\|Q)}a_n\right)\nonumber\\
& \geq & \log Q\left(\frac{\lambda_n}{\sqrt{V(P\|Q)}}\sqrt{n a_n^2}\right) -  B_n \nonumber\\
& \geq & - \frac{\lambda_n^2}{2V(P\|Q)}  n a_n^2 - \frac{1}{2}\log\left(\frac{2\pi\lambda_n^2}{V(P\|Q)} n a_n^2\right) + \log\left(1-\frac{V(P\|Q)}{\lambda_n^2 n a_n^2}\right) -B_n \nonumber\\
& \geq & - \frac{\lambda_n^2}{2V(P\|Q)}  n a_n^2 - B_n - C_n\nonumber\\
& = & -n a_n^2 + B_n + C_n - \log(1-2 s) - B_n - C_n \nonumber\\
& = & -n a_n^2 - \log(1-2 s). \label{eq:LB_this}
\end{IEEEeqnarray}
Here, the second and fourth inequality follow by bounding $V(P\|Q) \leq \lambda_n^2 \leq 2 V(P\|Q)$; the third inequality follows from the lower bound on the $Q$ function. It follows that
\begin{equation}
P^n\left(\log\frac{P^n}{Q^n} \leq R_n\right) \geq e^{-n a_n^2-\log(1-2 s)} = \frac{1}{1-2 s} \varepsilon_n
\end{equation}
which is \eqref{eq:LB_P}.

Then, from \eqref{eq:UB}--\eqref{eq:forget_logP} and \eqref{eq:lambda_asymp},
\begin{IEEEeqnarray}{lCl}
D_h^{\varepsilon_n}(\rho^{\otimes n} \| \sigma^{\otimes n}) & \leq & n D(P\|Q) -n a_n \lambda_n + o(n a_n) \nonumber\\
& = & n D(P\|Q) - n a_n \sqrt{2 V(P\|Q)+o(1)} + o(n a_n) \nonumber\\
& = &  n D(P\|Q) - \sqrt{2 V(P\|Q)} n a_n + o(n a_n) \label{eq:last}
\end{IEEEeqnarray}
where the last step follows by performing a Taylor series of $\sqrt{2V(P\|Q)+o(1)}$ around $2 V(P\|Q)$. Together with \eqref{eq:in} and \eqref{eq:relvariance}, this proves that the right-hand side of \eqref{eq:moderate} is an upper bound on $D_h^{\varepsilon_n}(\rho^{\otimes n} \| \sigma^{\otimes n})$.

To derive a lower bound on $D_h^{\varepsilon_n}(\rho^{\otimes n} \| \sigma^{\otimes n})$, we begin by bounding the hypothesis testing relative entropy using the bound in Theorem~\ref{thm_general} for some fixed value $s$
\begin{equation}
    D_h^{\varepsilon_n}(\rho^{\otimes n} \| \sigma^{\otimes n}) \geq -\log\left( Q^n \left( \log\frac{P^n}{Q^n} > R_n\right) \right) +\log{s} \label{eq:achiev_mod_new}
\end{equation}
where $R_n$ is an arbitrary threshold satisfying
\begin{equation}
    P^n \left( \log\frac{P^n}{Q^n} \leq R_n\right) \leq (1-s)\varepsilon_n. \label{eq:Pnleq}
\end{equation}
Indeed, the Neyman-Pearson trade-off $\beta_\alpha(P^n,Q^n)$ is non-increasing in $\alpha$, so any threshold satisfying \eqref{eq:Pnleq} yields the required bound. We next bound the probability in \eqref{eq:achiev_mod_new} by using that, on the event $\{\log (P^n/Q^n) > R_n\}$, we have $Q^n < 2^{-R_n}P^n$. Therefore
\begin{equation}
    Q^n\left( \log\frac{P^n}{Q^n}  > R_n\right) \leq 2^{-R_n} P^n\left( \log\frac{P^n}{Q^n} > R_n \right). \label{eq:ineqprob2}
\end{equation}
Substituting \eqref{eq:ineqprob2} into \eqref{eq:achiev_mod_new} gives
\begin{IEEEeqnarray}{lCl}
     D_h^{\varepsilon_n}(\rho^{\otimes n} \| \sigma^{\otimes n}) &\geq & R_n -\log\left(1-P^n\left(\log\frac{P^n}{Q^n} \leq R_n\right)\right) +\log s. \label{eq:achiev_mod_new2}
\end{IEEEeqnarray}
We next choose 
\begin{equation}
    R_n = nD(P\|Q) + \lambda_n \label{eq:lambda_def}
\end{equation}
for a suitable choice $\lambda_n \leq 0$ and define the zero-mean random variable
\begin{equation}
    X_k = \log \frac{P}{Q}-D(P\|Q).
\end{equation}
The probability constraint \eqref{eq:Pnleq} then becomes
\begin{equation}
    P^n \left( -\sum_{k=1}^n X_k \geq -\lambda_n\right) \leq (1-s)\varepsilon_n. \label{eq:Pnleq2}
\end{equation}
The assumption $\supp(\rho) \subseteq \supp(\sigma)$ implies that $P \ll Q$, so on the support of $P$, $|X_k| \leq M$ for some finite $M$. Bernstein's inequality \cite[Corollary~2.11]{boucheron2003concentration} yields then
\begin{equation}
    P^n\left(-\sum_{k=1}^n X_k \geq -\lambda_n\right) \leq \exp\left(-\frac{\lambda_n^2}{n2V(P\|Q) - \frac{2}{3}M\lambda}\right).
\end{equation}
Hence, \eqref{eq:Pnleq} can be satisfied by choosing
\begin{equation}
    \frac{\lambda_n^2}{n2V(P\|Q) - \frac{2}{3}M\lambda} = na_n^2 - \log (1-s). \label{eq:berns_eq2}
\end{equation}
Solving the resulting quadratic equation and taking its negative root gives
\begin{equation}
    \lambda_n = -\left(\frac{M}{3} + \sqrt{\frac{M^2}{9} + \frac{n2V(P\|Q)}{na_n^2 - \log (1-s)}}\right)(na_n^2 - \log (1-s)). \label{eq:lambdan}
\end{equation} 
Grouping terms of order smaller than $n a_n$, \eqref{eq:lambdan} can be simplified to 
\begin{IEEEeqnarray}{lCl}
    \lambda_n &= & -\sqrt{\frac{M^2}{9} + \frac{n2V(P\|Q)}{na_n^2 - \log (1-s)}}na_n^2 + o(na_n) \nonumber\\
    &= & -\sqrt{\frac{n2V(P\|Q)}{na_n^2 - \log (1-s)} + O(1)} na_n^2 +o(na_n)\nonumber\\
    &=& -\sqrt{\frac{n^3a_n^42V(P\|Q)}{na_n^2 - \log (1-s)}} + o(na_n). \label{eq:taylor1}
\end{IEEEeqnarray}
Finally, a first-order Taylor expansion of \eqref{eq:taylor1} around $n^2a_n^22V(P\|Q)$ yields
\begin{IEEEeqnarray}{lCl}
    \sqrt{\frac{n^3a_n^42V(P\|Q)}{na_n^2 - \log (1-s)}} &= &\sqrt{n^2a_n^22V(P\|Q) + \frac{n^2a_n^22V(P\|Q)\log s}{na_n^2 - \log (1-s)}}\nonumber\\
    &=& na_n\sqrt{2V(P\|Q) + o(1)} \nonumber\\
    &=&\sqrt{2V(P\|Q)}na_n + o(na_n),
\end{IEEEeqnarray}
so
\begin{equation}
    \lambda_n = -\sqrt{2V(P\|Q)}na_n + o(na_n). \label{eq:taylor2}
\end{equation}

Combining \eqref{eq:Pnleq}, \eqref{eq:lambda_def}, and \eqref{eq:taylor2} with \eqref{eq:achiev_mod_new2}, we obtain that
\begin{IEEEeqnarray}{lCl}
D_h^{\varepsilon_n}(\rho^{\otimes n} \| \sigma^{\otimes n}) & \geq & n D(P \| Q) - \sqrt{2V(P\|Q)}na_n - \log\bigl(1-(1-s) \varepsilon_n\bigr) + \log s + o(n a_n) \nonumber\\
& = & n D(P \| Q) - \sqrt{2V(P\|Q)}na_n + o(n a_n)
\end{IEEEeqnarray}
since $\varepsilon_n$ and $\log s$ are both of order $o(n a_n)$. We thus obtain that the right-hand side of \eqref{eq:moderate} is also a lower bound on $D_h^{\varepsilon_n}(\rho^{\otimes n} \| \sigma^{\otimes n})$.
\end{IEEEproof}

\subsection{Large Deviations}
Finally, we consider the large deviations regime, where the type I error probability decays exponentially with $n$ as $\varepsilon_n = e^{-nr}$ at a fixed rate $r > 0$. In this setting, Theorems~\ref{thm:low} and~\ref{thm_general} provide a direct framework for characterizing the optimal error exponent.

\begin{corollary}\label{thm:large}
    Let $\varepsilon_n = e^{-n r}$ where $0 < r < D(\rho \| \sigma)$. Then,
\begin{IEEEeqnarray}{lCl}
  \mathop{\lim}_{n\to\infty} \frac{1}{n} D_h^{\varepsilon_n}(\rho^{\otimes n} \| \sigma^{\otimes n}) & = & \sup_{0< p < 1} \left\{ D_p(\rho\|\sigma) + \frac{p}{p-1} r \right\}\label{eq:hoeffding-bound}
\end{IEEEeqnarray}
where $D_p(\rho\|\sigma)$ was defined in \eqref{eq:D_p}.
\end{corollary}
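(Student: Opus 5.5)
The plan is to sandwich the quantum quantity between two classical ones using Theorem~\ref{thm:low} and Theorem~\ref{thm_general}, each with a \emph{fixed} $s\in(0,1)$ independent of $n$, and then apply the classical Hoeffding theorem to the Nussbaum--Szko{\l}a pair $P^{\otimes n},Q^{\otimes n}$. The guiding observation is that a constant prefactor $s$ and a constant rescaling of $\varepsilon_n$ affect neither $\frac1n\log$ of the error nor the exponent $r$, because $\frac{1}{1-s}e^{-nr}=e^{-n(r-\frac1n\log\frac1{1-s})}$.

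\emph{Converse.} Taking logarithms in \eqref{eq:thmlow} with $s=1/2$ (admissible as soon as $e^{-nr}\le 1/2$) gives
\begin{equation*}
\tfrac1n D_h^{\varepsilon_n}(\rho^{\otimes n}\|\sigma^{\otimes n}) \le \tfrac1n D_h^{2\varepsilon_n}(P^n\|Q^n) + \tfrac{\log 2}{n}.
\end{equation*}
I would bound the classical term directly from \eqref{eq:var_class} with $t'=e^{nu}$, using the standard bound $\min\{a,b\}\le a^{p}b^{1-p}$. This gives, for every $p\in(0,1)$ and every $u$,
\begin{equation*}
\beta_{2\varepsilon_n}(P^n,Q^n)\ge \sum \min\{t'P^n,Q^n\} - 2t'\varepsilon_n .
\end{equation*}
A cleaner route is to invoke the classical Hoeffding theorem (Blahut/Csisz\'ar--Longo) as a black box. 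It states that, for $0<r<D(P\|Q)$,
\begin{equation*}
\lim_n -\tfrac1n\log\beta_{e^{-nr'}}(P^n,Q^n)=\sup_{0<p<1}\{D_p(P\|Q)+\tfrac{p}{p-1}r'\},
\end{equation*}
and the right-hand side is continuous in $r'$. Writing $2\varepsilon_n=e^{-n r_n}$ with $r_n=r-\frac{\log 2}{n}\to r$, I would use monotonicity in $\alpha$ to sandwich between rates $r\pm\delta$ for arbitrary $\delta>0$, and then let $\delta\to0$ using continuity. This yields the $\limsup$ bound with $D_p(P\|Q)$.

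\emph{Achievability.} Theorem~\ref{thm_general} with $s=1/2$ gives
\begin{equation*}
\tfrac1n D_h^{\varepsilon_n}(\rho^{\otimes n}\|\sigma^{\otimes n}) \ge \tfrac1n D_h^{\varepsilon_n/2}(P^n\|Q^n) - \tfrac{\log 2}{n}.
\end{equation*}
The same continuity argument, now with rate $r+\frac{\log2}{n}$, gives the matching $\liminf$. Finally, \eqref{eq:D_p} identifies $D_p(P\|Q)=D_p(\rho\|\sigma)$, and $D(P\|Q)=D(\rho\|\sigma)$ by \eqref{eq:in}, so the hypothesis $r<D(\rho\|\sigma)$ is exactly the classical hypothesis.

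\emph{Main obstacle.} The delicate point is the continuity of the Hoeffding function $r\mapsto\sup_{0<p<1}\{D_p+\frac{p}{p-1}r\}$ at the given $r$, which is needed to absorb the $O(1/n)$ shift in the rate. This function is a supremum of affine functions of $r$, hence convex, and it is finite on $(0,D)$ because it is bounded above by $D_0$-type quantities. A convex function that is finite on an open interval is continuous there, so the shift is harmless. If one prefers not to cite the classical theorem, the main work moves into proving it: Chernoff/Markov with the tilted distribution for the achievability, and a change-of-measure argument with the tilted distribution $P_p\propto P^pQ^{1-p}$ for the converse. Still, the quantum-to-classical transfer itself is immediate from Theorems~\ref{thm:low} and~\ref{thm_general}.
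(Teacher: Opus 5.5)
Your proposal is correct and follows essentially the paper's route: fix $s$ in Theorem~\ref{thm:low} and Theorem~\ref{thm_general}, pass to the Nussbaum--Szko{\l}a pair $P^{\otimes n},Q^{\otimes n}$, and invoke the classical Hoeffding theorem of Csisz\'ar--Longo. Your monotonicity-plus-continuity sandwich only spells out the $O(1/n)$ rate shift, which the paper handles by noting $\frac1n\log\bigl(\frac{1}{1-s}\varepsilon_n\bigr)\to -r$. Two slips do not affect the argument. First, the Hoeffding function is finite on $(0,D)$ because it is bounded above by $D(P\|Q)$, since each $D_p\le D(P\|Q)$ and $\frac{p}{p-1}r<0$; $D_0$ is a lower bound, not an upper bound. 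Second, the route you abandoned via $\min\{a,b\}\le a^pb^{1-p}$ bounds $\sum\min\{t'P^n,Q^n\}$ from above, which is the wrong direction for a converse.
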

\begin{IEEEproof}
We first prove the converse by taking Theorem~1 and setting $\varepsilon_n = e^{-nr}$, where $0 < r< D(P \| Q)$. For some value $s$, this yields
\begin{equation}
    D_h^{\varepsilon_n}(\rho^{\otimes n} \| \sigma^{\otimes n}) \leq -\log s -\log\beta_{\frac{1}{1-s}\varepsilon_n}(P^n, Q^n).
\end{equation}
Observe that
\begin{equation}
    \lim_{n \to \infty} \frac{1}{n} \log\left( \frac{1}{1-s} \varepsilon_n\right) = -r.
\end{equation}
It follows from the classical large deviations result \cite[Theorem~1]{CsiszarLongo1971} that
\begin{IEEEeqnarray}{lCl}
    \limsup_{n \to \infty} \frac{1}{n} D_h^{\varepsilon_n}(\rho^{\otimes n} \| \sigma^{\otimes n}) &\leq & -\lim_{n \to \infty} \frac{1}{n}\log \beta_{\frac{1}{1-s} \varepsilon_n}(P^n, Q^n) \nonumber\\
    &= &\sup_{0\leq p\leq 1} \left\{ D_p(\rho\|\sigma) + \frac{p}{p-1} r \right\}. \label{eq:LD_UB}
\end{IEEEeqnarray}

For the achievability, we apply the upper bound in Theorem~\ref{thm_general} with some fixed $s$ to obtain
\begin{equation}
    D_h^{\varepsilon_n}(\rho^{\otimes n} \| \sigma^{\otimes n}) \geq -\log\frac{1}{1-s} - \log \beta_{s \varepsilon_n} (P^n, Q^n).
\end{equation}
Similar to the converse, we have that 
\begin{equation}
    \lim_{n\to \infty} \frac{1}{n}\log(s \varepsilon_n) = -r.
\end{equation}
We then obtain again by \cite[Theorem~1]{CsiszarLongo1971} that
\begin{IEEEeqnarray}{lCl}
    \liminf_{n \to \infty} \frac{1}{n} D_h^{\varepsilon_n}(\rho^{\otimes n} \| \sigma^{\otimes n}) &\geq& -\lim_{n \to \infty} \frac{1}{n} \log\beta_{s \varepsilon_n}(P^n, Q^n) \nonumber\\
    & = & \sup_{0 < p < 1} \left\{ D_p(\rho\|\sigma) + \frac{p}{p-1} r \right\}. \label{eq:LD_LB}
\end{IEEEeqnarray}
Combining \eqref{eq:LD_UB} and \eqref{eq:LD_LB} proves the corollary.
\end{IEEEproof}

\section{Numerical Results and Discussion}
\label{sec:discussion}
In this section, we compare the converse and achievability bounds in Theorems~\ref{thm:low}--\ref{thm_pure} with previous bounds obtained in the literature in the non-asymptotic setting.
\subsection{Comparison with Existing Converse Bounds}

To evaluate the converse bound established in Theorem~\ref{thm:low}, we compare it against four benchmark results from the literature:

\begin{enumerate}
    \item The harmonic-mean converse bound established by Cheng et al.~\cite[Th.~4.2]{cheng2026multiple}:
    \begin{equation}
        \beta_{\alpha}(\rho, \sigma) \geq \sup_{t\geq0} \left\{\text{HM}(t\rho, \sigma)-t\alpha\right\}, \label{eq:hmbound}
    \end{equation}
    where, for the spectral decomposition in \eqref{eq:spec}, the Petz-Nussbaum-Szkoła harmonic mean is defined as
    \begin{equation}
        \text{HM}(t\rho, \sigma)  = \sum_{ij}\frac{t\lambda_i\mu_j}{t\lambda_i + \mu_j}\lvert\braket{x_i | y_j}\rvert^2.
    \end{equation}
    \item The Nussbaum--Szkoła bound on the minimum average error probability from Audenaert et al.~\cite[Prop.~2]{audenaert2008}, which, when extended to the asymmetric setting, coincides with Theorem~\ref{thm:low} for $s = 1/2$
    \begin{equation}
        \beta_\alpha(\rho, \sigma) \geq \frac{1}{2}\beta_{2\alpha}(P,Q).
    \end{equation}
    \item The fidelity-based converse bound from Pereira et al.~\cite[Eq.~(10)]{pereira2023analytical}, given by
    \begin{equation}
        \beta_{\alpha}(\rho, \sigma) \geq \alpha - 2\alpha F^2(\rho, \sigma) + F^2(\rho, \sigma) - 2 \sqrt{(1-\alpha)\alpha(1-F^2(\rho, \sigma))F^2(\rho, \sigma)}, \label{eq:pure_fidelity}
    \end{equation}
    where the fidelity is defined as
    \begin{equation}
        F(\rho, \sigma) \triangleq \Tr\left[\left|\sqrt{\rho}\sqrt{\sigma} \right|\right] =\Tr\left[\sqrt{\sqrt{\rho}\sigma \sqrt{\rho}}\right].
    \end{equation}
    \item The information-spectrum converse bound from Tomamichel and Hayashi~\cite[Eq.~(27)]{tomamichel2013hierarchy}, given by
    \begin{equation}
        \beta_{\alpha}(\rho, \sigma)  \geq \frac{\delta^{4}(1-\alpha-\delta)}{2^{8}(\alpha+\delta)\theta}2^{-D_{s}^{\alpha+\delta}(P\|Q)}\label{eq:TomHay}
    \end{equation}
    for $0 < \delta \leq \min\{\alpha, 1 - \alpha\}$, where $P$ and $Q$ are the Nussbaum--Szkoła distributions, $\theta$ is the minimum between the number of distinct eigenvalues of $\rho$ and $\sigma$, and the information-spectrum relative entropy is defined as
    \begin{equation}
        D_s^{\alpha}(P \| Q) = \sup \left\{ R \in \mathbb{R} \; \middle| \; \Tr[\rho \{\rho \preceq 2^R\sigma \}] \leq \alpha \right\}.
    \end{equation}
\end{enumerate}

\begin{figure}[tb]
    \centering
    \includegraphics[width=0.48\linewidth]{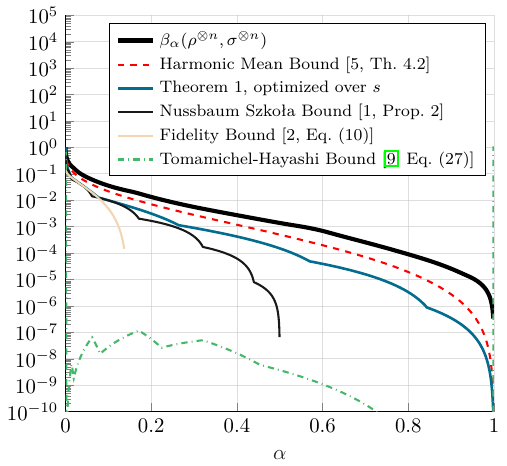}
    \caption{Comparison of converse bounds for the $5$-fold tensor products $\rho^{\otimes 5}$ and $\sigma^{\otimes 5}$ of the states in \eqref{eq:mixedstates}.}
    \label{fig:HTmixed}
\end{figure}

Figure~\ref{fig:HTmixed} compares the resulting converse bounds with the true quantum error trade-off function $\beta_\alpha(\rho^{\otimes n}, \sigma^{\otimes n})$
for $n= 5$ and the mixed states
\begin{equation}
\rho = \begin{bmatrix} 0.5 & 0.4 \\ 0.4 & 0.5\end{bmatrix}, \qquad \sigma =\begin{bmatrix} 0.9 & 0 \\ 0 & 0.1\end{bmatrix}. \label{eq:mixedstates}
\end{equation}
Observe the limited range of the Nussbaum--Szkoła bound in \cite[Prop.~2]{nussbaum2006lower} and the fidelity bound in \cite[Eq.~(10)]{pereira2023analytical},
which yield finite non-trivial bounds only for $0 \leq \alpha < F(\rho, \sigma)^{2n}$ and $0 \leq \varepsilon < 1/2$, respectively.
In contrast, the bound of Theorem~\ref{thm:low}, optimized over $s\in[0,1]$, 
provides a tight characterization of the error trade-off over the entire range $0 \leq \alpha < 1$. Furthermore, the Tomamichel--Hayashi bound \cite[Eq.~(27)]{tomamichel2013hierarchy} yields finite values over $0 < \alpha < 1$, but it is significantly looser than the bound obtained from Theorem~\ref{thm:low} for the considered example. We further note that, while the bound in \cite[Eq.~(27)]{tomamichel2013hierarchy} recovers the small- and moderate-deviation results of Corollaries~\ref{thm:small-deviations} and \ref{thm:moderate-deviations}
(see~\cite{tomamichel2013hierarchy} and \cite{chubb2017moderate}),
it fails to recover the error exponent in the large-deviations analysis beyond Stein's regime.

The tightest bound is the harmonic mean bound presented in \cite[Th. 4.2]{cheng2026multiple}. In contrast to the bound in Theorem~\ref{thm:low}, it manages to also achieve the correct third-order term in the small deviations regime; see \cite[Remark 6.8]{cheng2026multiple}. However, it is more difficult to evaluate analytically, and it must be relaxed in order to obtain the large deviations result. We believe the bound presented in Theorem 1 strikes a good balance between tightness and analytic tractability.

\subsection{Comparison with Existing Achievability Bounds}
To evaluate the performance of the bounds in Theorems~\ref{thm_general} and \ref{thm_pure}, we compare them against established achievability results in the literature across two scenarios: mixed-mixed states and pure-mixed states. We compare these bounds against the following benchmarks:
\begin{enumerate}
    \item The second-order achievability bound from Tomamichel and Hayashi~\cite[Eq.~28]{tomamichel2013hierarchy}, given by
    \begin{equation}
        \beta_{\alpha}(\rho, \sigma) \leq \frac{\delta}{\theta} 2^{-D_{s}^{\alpha-\delta}(P\|Q)}, \label{eq:tomhayAch}
    \end{equation}
    where $\delta = \min \{\varepsilon, 1 - \varepsilon\}$.

        \item The Chernoff-overlap-based upper bound \cite[Prop. 3.2]{audenaert2012quantum}\footnote{The bound is in a slightly different form as in \cite[Prop. 3.2]{audenaert2012quantum}. Specifically, \eqref{eq:audenaert} corresponds to ~\cite[ Eq.~(17)]{pereira2023analytical}, where the Lagrange multiplier $t$ in \cite[Prop. 3.2]{audenaert2012quantum} is already optimized.}
    \begin{equation}\label{eq:audenaert}
        \beta_{\alpha}(\rho,\sigma) \leq \inf_{0 < p < 1} \Biggl\{(1-p) Q_p^{1/(1-p)} \left(\frac{p}{\alpha}\right)^{p/(1-p)}\Biggr\},
    \end{equation}
    where
    \begin{equation}\label{eq:qp}
        Q_p \triangleq e^{\kappa(p)} = \Tr\left[\rho^p\sigma^{1-p}\right].
    \end{equation}
    \item The harmonic-mean achievability bound from Cheng et al.~\cite[Th.~4.2]{cheng2026multiple}
    \begin{equation}
        \beta_\alpha (\rho, \sigma) \leq \inf_{0 < s < 1} \Biggl\{ \frac{1}{1-s}\sup_{t \geq 0} \bigl \{ \text{HM}(t\rho, \sigma) -t s \alpha \bigr \}\Biggr\}, \label{eq:thm42ach}
    \end{equation}
    
\end{enumerate}

\subsubsection{Mixed-Mixed States}
First, we consider hypothesis testing between two mixed states. Figure~\ref{fig:HTmixed_low} compares the achievability bounds against the true quantum error trade-off function for $n=5$ and the states \eqref{eq:mixedstates}.
\begin{figure}[tb]
    \centering
    \includegraphics[width=0.48\linewidth]{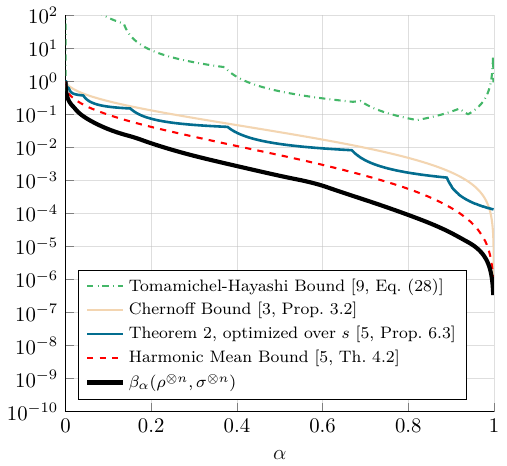}
    \caption{Comparison of achievability bounds for the $5$-fold tensor products $\rho^{\otimes 5}$ and $\sigma^{\otimes 5}$ of the states \eqref{eq:mixedstates}.}
    \label{fig:HTmixed_low}
\end{figure}
As expected in the mixed-mixed scenario, the harmonic-mean bound \cite[Th.~4.2]{cheng2026multiple} provides the tightest approximation of the error trade-off over the entire range $0 \leq \alpha < 1$. This occurs because the bound in Theorem~\ref{thm_general} is a relaxation of \cite[Th.~4.2]{cheng2026multiple}. Similar to the converse bound, the harmonic-mean achievability bound recovers the exact third-order term in the small deviations regime, but it must be relaxed to obtain the large-deviations result. Again, we believe Theorem~\ref{thm_general} strikes a good balance between tightness and analytic tractability.

The Chernoff-distance-based bound \cite[Eqs.~(19)--(20)]{pereira2023analytical} provides looser approximations compared to \cite[Th.~4.2]{cheng2026multiple} and Theorem~\ref{thm_general}. While it cannot recover small- and moderate-deviations results, its construction via $Q_p$ allows for a straightforward recovery of large-deviations achievability results. Moreover, because $Q_p$ factorizes under tensor products, its numerical evaluation is significantly less demanding than the evaluation of the other bounds.

\subsubsection{Pure-Mixed States}

Next, we consider the scenario where $\rho$ is a pure state and $\sigma$ is a mixed state, specifically using the states
\begin{equation}
    \rho = \begin{bmatrix}
        0.5 & 0.5 \\ 0.5 & 0.5
    \end{bmatrix}, \qquad 
    \sigma = \begin{bmatrix} 
        0.9 & 0 \\ 0 & 0.1
    \end{bmatrix}. \label{eq:purestates}
\end{equation}

We compare \cite[Eq.~(28)]{tomamichel2013hierarchy} and \cite[Th.~4.2]{cheng2026multiple} against Theorem~\ref{thm_pure} optimized over $(s_1,s_2)\in \mathcal{R}$ by setting $s_1 = 4s(1-s)$ and $s_2=s$, i.e.,
\begin{equation}
    \beta_{\alpha} (\rho, \sigma) \leq \frac{1}{4 s (1-s)} \beta_{s \alpha}(P, Q), \label{eq:betalphaupper}
\end{equation}
optimized over $s \in (0,1)$. 

\begin{figure}[tb]
    \centering
    \includegraphics[width=0.48\linewidth]{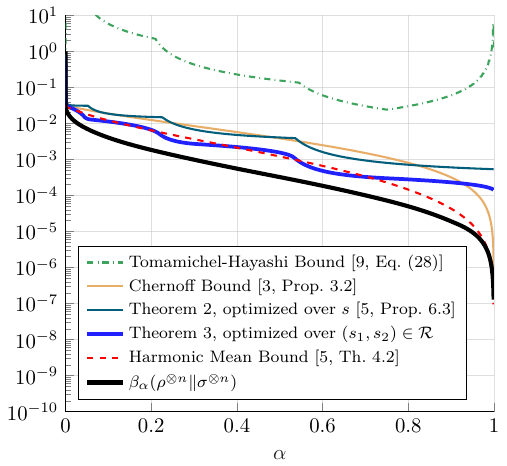}
    \caption{Comparison of achievability bounds for the $5$-fold tensor products $\rho^{\otimes 5}$ and $\sigma^{\otimes 5}$ of the states  \eqref{eq:purestates}.}
    \label{fig:HTpure}
\end{figure}

Figure~\ref{fig:HTpure} displays relative performance trends among the common achievability bounds consistent with those in Figure~\ref{fig:HTmixed_low}.
In this pure-mixed scenario, we also include Theorem~\ref{thm_pure} optimized over $(s_1,s_2)\in \mathcal{R}$ which is tighter than Theorem~\ref{thm_general} in this setting.
Observe that Theorem~\ref{thm_pure} exhibits a similar tightness as the harmonic-mean bound. In fact, neither of the two bounds dominates over the entire range $0 \leq \alpha \leq 1$.

\appendices
\section{Proof of Lemma~3}\label{appendixA}
In this appendix, we establish a rigorous connection between the quantum operator minimum $\Tr[\min\{A,B\}]$ for non-commuting operators and its classical counterpart under the Nussbaum--Szkoła mapping \eqref{eq:Ns-mapping}. Specifically, we consider an arbitrary operator $A \succeq 0$ and a rank-1 operator $B \succeq 0$ with spectral decompositions
\begin{equation}\label{eqn:QHT-AB-def}
    A = \sum_i a_i \ket{x_i}\!\bra{x_i}, \qquad B = b_1 \ket{y_1}\!\bra{y_1}.
\end{equation}
We associate with the pair $(A,B)$ the commuting operators
\begin{equation}
\label{eq:Lambda-Gamma}
\begin{aligned}
P_{A,B}
\triangleq \bigoplus_{i,j}
a_i \lvert\braket{x_i|y_j}\rvert^2,\qquad
Q_{A,B}
\triangleq \bigoplus_{i,j}
b_j \lvert\braket{x_i|y_j}\rvert^2,
\end{aligned}
\end{equation}
acting on $\mathcal H\otimes\mathcal H$.
Before presenting the proof of Lemma~\ref{lemma3}, we establish the necessary notation and two auxiliary results (a lemma and a proposition) that form the core of the derivation. First, recall from \eqref{eq:trmin} that the quantum operator minimum can be rewritten as
\begin{equation}
    \Tr[\min\{A,B\}] = \Tr[B] - \Tr\!\bigl[(A-B)_-\bigr], \label{eq:trmin2} 
\end{equation}
where $(A-B)_-$ denotes the negative part of the self-adjoint operator $A-B$. For positive weights $s_1, s_2 > 0$ and the overlap coefficients $w_i$ defined in \eqref{eq:weights_i}, the trace of the classical minimum evaluates directly to
\begin{equation}
    \Tr\biggl[\min\biggl\{\frac{P_{A,B}}{s_1}, \frac{Q_{A,B}}{s_2}\biggr\}\biggr] = \sum_i w_i \min\!\left(\frac{a_i}{s_1},\,\frac{b_1}{s_2}\right).
\end{equation}
To evaluate the quantum trace term $\Tr[(A-B)_-]$ in \eqref{eq:trmin2}, we first analyze the spectral properties of the rank-1 perturbation $A-B$.
\begin{lemma}\label{lem:rank1-facts}
Let $A$ and $B$ be as defined in \eqref{eqn:QHT-AB-def}.
The operator $A-B$ has at most one negative eigenvalue. Moreover, if
$\lambda$ is an eigenvalue of $A-B$ and $\lambda\neq a_i$ for every $i$, then
$\lambda$ is a root of the secular equation
\begin{equation}\label{eqn:rank1-secular-equation}
1-b_1\sum_i \frac{w_i}{a_i-\lambda}=0.
\end{equation}
\end{lemma}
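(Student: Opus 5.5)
The plan is to handle the two claims separately, both via the fact that $A-B$ is a rank-one negative perturbation of the positive semidefinite operator $A$.

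\emph{At most one negative eigenvalue.} We use the Courant--Fischer min--max characterization, or equivalently Weyl's interlacing for rank-one perturbations. Let $N$ be the span of the eigenvectors of $A-B$ with negative eigenvalues. Suppose $\dim N\ge 2$. Then $N$ intersects $\{\ket{y_1}\}^{\perp}$ nontrivially, so there is a unit vector $\ket{v}\in N$ with $\braket{y_1|v}=0$. For this vector,
\[
\braket{v|(A-B)|v}=\braket{v|A|v}-b_1\lvert\braket{y_1|v}\rvert^2=\braket{v|A|v}\ge 0 .
\]
This contradicts $\braket{v|(A-B)|v}<0$, which holds for every nonzero vector in $N$. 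Hence $\dim N\le 1$, i.e., $A-B$ has at most one negative eigenvalue, counted with multiplicity.

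\emph{Secular equation.} Let $(A-B)\ket{\psi}=\lambda\ket{\psi}$ with $\ket{\psi}\neq 0$, and assume $\lambda\neq a_i$ for all $i$. Then $A-\lambda I$ is invertible and
\[
(A-\lambda I)\ket{\psi}=b_1\ket{y_1}\braket{y_1|\psi}.
\]
Set $c\triangleq\braket{y_1|\psi}$. If $c=0$, then $(A-\lambda I)\ket{\psi}=0$, so $\ket{\psi}=0$ by invertibility, which is a contradiction. Therefore $c\neq 0$ and
\[
\ket{\psi}=b_1c\,(A-\lambda I)^{-1}\ket{y_1}.
\]
Take the inner product with $\bra{y_1}$ and expand $(A-\lambda I)^{-1}=\sum_i (a_i-\lambda)^{-1}\ket{x_i}\bra{x_i}$. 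This gives
\[
c=b_1c\sum_i \frac{\lvert\braket{x_i|y_1}\rvert^2}{a_i-\lambda}=b_1c\sum_i\frac{w_i}{a_i-\lambda}.
\]
Dividing by $c\neq 0$ yields \eqref{eqn:rank1-secular-equation}.

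Neither step is a real obstacle; the argument is standard. The one subtle point is the case $c=0$, which has to be excluded in the secular-equation step. If $a_i$ has multiplicity greater than one, nothing changes: $w_i$ is defined with respect to a fixed eigenbasis, and the resolvent expansion still holds in that basis.
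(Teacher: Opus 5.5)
Your proof is correct and is essentially the paper's argument: the paper derives the secular equation from the matrix determinant lemma $\det(A-B-\lambda I)=\det(A-\lambda I)\bigl(1-b_1\bra{y_1}(A-\lambda I)^{-1}\ket{y_1}\bigr)$, which is the determinant form of your eigenvector computation, and it gets the negative-eigenvalue count by citing Weyl's interlacing $\lambda_{i+1}(A)\le\lambda_i(A-B)\le\lambda_i(A)$ for rank-one perturbations. Your dimension count on $N\cap\{\ket{y_1}\}^{\perp}$ is a self-contained proof of exactly the consequence of interlacing needed here (counted with multiplicity, as later used in $\Tr[\min\{A,B\}]=1+\min(\lambda,0)$).
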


\begin{IEEEproof}
For the second statement, apply the matrix
determinant lemma to obtain
\begin{equation}
\det(A-B-\lambda I)
=
\det(A-\lambda I)
\left(
1-b_1\bra{y_1}(A-\lambda I)^{-1}\ket{y_1}
\right),
\end{equation}
and expand the expectation $\bra{y_1}(A-\lambda I)^{-1}\ket{y_1}$ in the eigenbasis of $A$. The first statement, i.e., the fact that there is at most one negative eigenvalue follows from Weyl's interlacing theorem specialized to rank-1 perturbations \cite[Corollary~4.3.9]{horn2012matrix}, which states that, if $\lambda_i(\cdot)$ denotes the $i$-th largest eigenvalue, then $\lambda_{i+1}(A) \leq \lambda_i(A-B) \leq \lambda_{i}(A)$.
\end{IEEEproof}

\begin{proposition}\label{thm:rank1-two-param}
Let $A\succeq0$ and let $B$ be rank one. For $0<s_1,s_2\leq1$,
\begin{equation}\label{eqn:rank1-ratio-bound}
\min_{A,B} \left\{\frac{\Tr\bigl[\min\bigl\{\frac{P_{A,B}}{s_1}, \frac{Q_{A,B}}{s_2}\bigr\}\bigr]}{\Tr\bigl[\min\{A,B\}\bigr]}\right\}
=
R^\star (s_1,s_2), 
\end{equation}
where
\begin{equation}\label{eqn:R-s1-s2-star}
R^\star (s_1,s_2)
\triangleq
\begin{cases}
\displaystyle \frac{4s_2-s_1}{4s_2^2}, & 0<s_1\leq 2s_2,\\[6pt]
\displaystyle \frac{1}{s_1}, & 2s_2<s_1\leq1.
\end{cases}
\end{equation}
\end{proposition}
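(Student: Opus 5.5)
The plan is to prove the two directions of \eqref{eqn:rank1-ratio-bound} separately. First, the ratio is at least $R^\star(s_1,s_2)$ for every admissible pair $(A,B)$. Second, a sequence of two-dimensional examples generalizing \eqref{eq:ABexample} drives the ratio down to $R^\star(s_1,s_2)$.

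\emph{Normalization.} Both traces are positively homogeneous of degree one under $(A,B)\mapsto(cA,cB)$, so I will take $b_1=1$. The overlaps $w_i$ of \eqref{eq:weights_i} then form a probability vector, since $\sum_i w_i=\|\ket{y_1}\|^2=1$.

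\emph{Evaluating the quantum side.} I would use \eqref{eq:trmin2} together with Lemma~\ref{lem:rank1-facts}.
\begin{itemize}
\item If $A-B\succeq0$, then $\Tr[\min\{A,B\}]=\Tr[B]=1$. In this case $\bra{y_1}A^{-1}\ket{y_1}\le 1$, i.e.\ $\sum_i w_i/a_i\le1$, which I expect to be the boundary case $\mu\to0$ of the case below.
\item Otherwise there is exactly one negative eigenvalue $-\mu<0$. It is not an eigenvalue of $A$ because $A\succeq0$, so it solves the secular equation \eqref{eqn:rank1-secular-equation}:
\begin{equation*}
\sum_i \frac{w_i}{a_i+\mu}=1 .
\end{equation*}
Writing $1=\sum_i w_i$ then gives the quantum side in the same ``classical'' form as the numerator:
\begin{equation*}
\Tr[\min\{A,B\}]=1-\mu=\sum_i w_i\,\frac{a_i}{a_i+\mu}.
\end{equation*}
\end{itemize}

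\emph{Reduction to a linear program.} For fixed $\mu$, the problem becomes a linear optimization over the probability measure $\sum_i w_i\delta_{a_i}$ on $[0,\infty)$. The objective is
\begin{equation*}
\sum_i w_i\min\!\left(\frac{a_i}{s_1},\frac{1}{s_2}\right)-R\sum_i w_i\frac{a_i}{a_i+\mu},
\end{equation*}
subject to the single linear constraint $\sum_i w_i/(a_i+\mu)=1$. I intend to show this objective is nonnegative when $R=R^\star$. By weak duality it suffices to exhibit, for each $\mu\in(0,1)$, a multiplier $c=c(\mu)$ such that the pointwise inequality
\begin{equation*}
g(a)\triangleq\min\!\left(\frac{a}{s_1},\frac{1}{s_2}\right)-R^\star\frac{a}{a+\mu}+c\left(\frac{1}{a+\mu}-1\right)\ge0
\end{equation*}
holds for all $a\ge0$. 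Integrating $g$ against $w$ then yields the lower bound.

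\emph{The main obstacle.} The crux is finding $c$ and verifying $g\ge0$. The function $g$ is piecewise of the form linear-plus-$\tfrac{\mathrm{const}}{a+\mu}$, with a kink at $a=s_1/s_2$. Each piece is convex or concave in $a$ according to the sign of $c-R^\star\mu$, so the minimum of $g$ occurs at $a=0$, at the kink, at a tangency point, or as $a\to\infty$. I would first try choosing $c$ so that $g$ touches zero at exactly two of these candidate points. Solving those equations and then minimizing over $\mu$ should produce the two regimes:
\begin{itemize}
\item when $2s_2<s_1$, the binding configuration is the kink together with $a\to\infty$, giving $1/s_1$;
\item when $s_1\le2s_2$, an interior tangency on the linear branch $a/s_1$ gives $(4s_2-s_1)/(4s_2^2)$.
\end{itemize}
Equivalently, since extreme points of a measure polytope with one moment constraint are two-atom measures, I can restrict to $d=2$. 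This reduces the task to a two-variable calculus problem, namely a minimization over $a_1,a_2,w_1$ with $\mu$ eliminated. This is also the fallback if the dual certificate proves awkward.

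\emph{Tightness.} The minimizing two-atom configuration read off from the previous step supplies the matching examples. For $s_1\le 2s_2$ I expect a pair of the type \eqref{eq:ABexample}, namely $a_1=\gamma-\tfrac12$ with $\gamma\to\infty$, with $a_2$ and $w_1$ tuned to $(s_1,s_2)$. For $2s_2<s_1$ I expect a degenerate limit in which one atom sits at $a\to\infty$ and the quantum trace approaches the classical one. Plugging these into the formulas above should give ratios converging to $R^\star$. This shows the infimum equals $R^\star$, with the minimum in \eqref{eqn:rank1-ratio-bound} understood as an infimum attained in the limit.
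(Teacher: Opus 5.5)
You have the setup right. After normalizing $b_1=1$, the unique negative eigenvalue $-\mu$ of $A-B$ solves $\sum_i w_i/(a_i+\mu)=1$. The identity $\Tr[\min\{A,B\}]=1-\mu=\sum_i w_i a_i/(a_i+\mu)$ is correct, and integrating a pointwise certificate $g\ge0$ against $w$ does give numerator $\ge R^\star(1-\mu)$. Your tightness family for $s_1\le 2s_2$ is also the right one: weight $s_1/(2s_2)$ at $a=s_1/(2s_2)$, the rest at $a=\gamma\to\infty$.

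The gap is that the lower bound, which is the whole content of the proposition, is never established. You leave $c(\mu)$ unconstructed, and your roadmap for finding it does not work.

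\begin{itemize}
\item For $2s_2<s_1$, the kink and $a\to\infty$ cannot be the binding pair: both satisfy $1/(a+\mu)<1$, so no feasible measure is supported on them. With $R^\star=1/s_1$ you get $g(1-\mu)=0$ for every $c$, because $1-\mu<s_1/s_2$. So $g\ge0$ forces $g'(1-\mu)=0$, i.e.\ $c=(1-\mu)/s_1$. This gives $g(a)=(a+\mu-1)^2/(s_1(a+\mu))$ below the kink and $g\ge 1/s_2-(2-\mu)/s_1>0$ above it. The value $c=1/s_2-1/s_1$ forced by $g(\infty)=0$ instead makes $g$ negative just to the right of $1-\mu$. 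Correspondingly, the extremal example in this regime is simply $A=B$, which attains $1/s_1$ exactly; no atom escapes to infinity.
\item For $s_1<2s_2$, $R^\star$ is approached only as $\mu\to0$, i.e.\ on the boundary with the case $A\succeq B$ that you set aside as a limit. For each fixed $\mu>0$ the inequality is strict, so there is no two-point configuration at level $R^\star$ to solve for. The working $c$ must also change with $\mu$: for $s_1>s_2$, the choice $c=1/s_2-R^\star$ fails for $\mu$ close to $1$, and you have to switch to $c=1/s_1-R^\star\mu$.
\item A small slip: the curvature of each piece of $g$ is governed by the sign of $c+R^\star\mu$, not $c-R^\star\mu$.
\end{itemize}

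The missing idea is a convexity reduction in the primal. It makes the lower bound short, and it is what the paper's Lagrangian step amounts to.
\begin{itemize}
\item Split at the kink: $S_1=\{i:a_i<s_1/s_2\}$ with $W_1=\sum_{S_1}w_i$. The numerator then equals $\tfrac1{s_1}\sum_{S_1}w_ia_i+\tfrac{1-W_1}{s_2}$.
\item The constraint gives $\sum_{S_1}w_i/(a_i+\mu)\le1$. Jensen for the convex map $a\mapsto1/(a+\mu)$ then yields $\sum_{S_1}w_ia_i\ge W_1(W_1-\mu)$. The case $A\succeq B$ is just $\mu=0$ with $\sum_iw_i/a_i\le1$.
\item Hence the ratio is at least $\bigl[W_1(W_1-\mu)/s_1+(1-W_1)/s_2\bigr]/(1-\mu)$ when $\mu\le W_1$, and it exceeds $1/s_2$ when $\mu>W_1$.
\item This bound is linear-fractional, hence monotone, in $\mu$. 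So it is at least the smaller of its value at $\mu=W_1$, namely $1/s_2$, and its value at $\mu=0$, namely $W_1^2/s_1+(1-W_1)/s_2$.
\item Minimizing the latter over $W_1\in[0,1]$ gives $R^\star$, which is at most $1/s_2$.
\end{itemize}
The equality conditions are $\mu\to0$, all of $S_1$ at $a=W_1=\min\{s_1/(2s_2),1\}$, and $S_2$ pushed to infinity. They produce both extremal configurations at once.

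Your two-atom fallback is legitimate after compactifying $a$ to $[0,\infty]$. It would still need an argument of this kind to show that the second atom should escape to infinity.
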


\begin{IEEEproof}
The ratio is invariant under common scaling of $A$ and $B$, so we normalize $b_1=1$. The numerator on the left-hand side of \eqref{eqn:rank1-ratio-bound} becomes
\begin{equation}
\Tr\biggl[\min\biggl\{\frac{P_{A,B}}{s_1}, \frac{Q_{A,B}}{s_2}\biggr\}\biggr] = \sum_i
w_i \min\!\left(\frac{a_i}{s_1},\,\frac{1}{s_2}\right).
\end{equation}
By \eqref{eq:trmin2} and Lemma~\ref{lem:rank1-facts}, if
$\lambda$ is the smallest eigenvalue of $A-B$, then
\begin{equation}\label{eqn:rank1-Pq-two-param}
\Tr[\min \{A, B\}]
=1+\min(\lambda,0).
\end{equation}
Then, the minimization on the left-hand side of \eqref{eqn:rank1-ratio-bound} subject to the secular constraint \eqref{eqn:rank1-secular-equation} can be written as
\begin{equation}\label{eq:miniratio}
\begin{aligned}
\underset{\bm{w}, \bm{a}, \lambda}{\text{minimize}} \quad & \frac{ \sum_{i=1}^{d} w_i \min\left( \frac{a_i}{s_1}, \frac{1}{s_2} \right)}{1 + \min(\lambda, 0)} \\[6pt]
\text{subject to} \quad & \sum_{i=1}^d \frac{w_i}{a_i - \lambda} = 1,
\end{aligned}
\end{equation}
where the optimization is carried out over the domain of $a_i \ge 0$ and probability vectors $\bm{w}$ satisfying $\sum_{i=1}^d w_i = 1$. Let
\begin{equation}
\mathcal{S}_1 \triangleq \left\{i:a_i<\frac{s_1}{s_2}\right\},
\qquad
\mathcal{S}_2 \triangleq \left\{i:a_i\geq\frac{s_1}{s_2}\right\}.
\end{equation}
Then, the objective of \eqref{eq:miniratio} becomes
\begin{equation}\label{eqn:R-s1-s2-expanded}
f(\bm{w}, \bm{a}, \lambda) = \frac{
\frac{1}{s_1}\sum_{i\in \mathcal{S}_1} w_i a_i
+
\frac{1}{s_2}\sum_{i\in \mathcal{S}_2}w_i
}
{1+\min(\lambda,0)}.
\end{equation}
For fixed $\lambda$, form the Lagrangian
\begin{IEEEeqnarray}{lCl}
\mathcal L(\bm{w}, \bm{a},\lambda,\nu)
&=&
\frac{
\frac{1}{s_1}\sum_{i\in \mathcal{S}_1} w_i a_i
+
\frac{1}{s_2}\sum_{i\in \mathcal{S}_2} w_i
}
{1+\min(\lambda,0)}
+
\nu\left(\sum_i\frac{w_i}{a_i-\lambda}-1\right),
\end{IEEEeqnarray}
where $\nu$ is the Lagrange multiplier associated with the constraint in \eqref{eq:miniratio}.

The stationarity equations imply that all $a_i$ with $i\in \mathcal{S}_1$ are equal, and we denote the common value by $a$. For $i\in \mathcal{S}_2$, the objective is independent of
$a_i$ while the constraint contribution decreases as $a_i$ grows, so the
optimum pushes these variables to infinity. Hence, with
\begin{equation}
W_1 \triangleq \sum_{i\in \mathcal{S}_1}w_i,
\end{equation}
the active constraint reduces to
\begin{equation}
\frac{W_1}{a-\lambda}=1,
\qquad\text{or equivalently}\qquad
a=\lambda+W_1.
\end{equation}
The feasibility condition $\bm{a}\geq0$ gives $\lambda\geq-W_1$. The reduced objective is
\begin{equation}\label{eqn:f-s1-s2}
f(\lambda,W_1)
=
\frac{
\frac{W_1}{s_1}(\lambda+W_1)
+
\frac{1-W_1}{s_2}
}
{1+\min(\lambda,0)},
\qquad
0\leq W_1\leq1,\quad \lambda\geq-W_1.
\end{equation}

\medskip\noindent
We next minimize $f(\lambda,W_1)$ over $\lambda$. If $\lambda\geq0$, then
\begin{equation}
f(\lambda,W_1)
=
\frac{W_1}{s_1}\lambda
+
\frac{W_1^2}{s_1}
+
\frac{1-W_1}{s_2},
\end{equation}
which is minimized at $\lambda=0$. For $-W_1 \le \lambda < 0$ and $W_1 > 0$, the objective function $f(\lambda, W_1)$ takes the form
\begin{equation}
f(\lambda, W_1) = \frac{W_1}{s_1} \frac{\lambda + g(W_1)}{\lambda + 1},
\label{eq:objective_lambda_W1}
\end{equation}
where 
\begin{equation}
g(W_1) \triangleq W_1 + \frac{s_1}{s_2} \frac{1 - W_1}{W_1}.
\end{equation}
Differentiating with respect to $\lambda$ yields
\begin{equation}
\frac{\partial f}{\partial \lambda} = \frac{W_1}{s_1} \frac{1 - g(W_1)}{(\lambda + 1)^2}.
\end{equation}
The sign of $\frac{\partial f}{\partial \lambda}$ is determined entirely by $1 - g(W_1)$, partitioning the minimization over $[-W_1, 0)$ into two distinct regimes. When $g(W_1) < 1$, $f(\lambda, W_1)$ is strictly increasing, and the minimum is attained at the lower boundary $\lambda = -W_1$, yielding
\begin{equation}
f(-W_1, W_1) = \frac{\frac{1-W_1}{s_2}}{1-W_1} = \frac{1}{s_2},
\end{equation}
which is independent of $W_1$. Nevertheless, we next show that considering $g(W_1) \geq 1$ yields a smaller objective, and therefore the global minimum is attained in this region:

When $g(W_1) \ge 1$, $f(\lambda, W_1)$ is non-increasing in $\lambda$, so its minimum is attained at the upper boundary $\lambda = 0$,
\begin{equation}
f(0,W_1)
=
\frac{W_1^2}{s_1}
+
\frac{1-W_1}{s_2}. \label{eq:objective_fin}
\end{equation}
It remains to minimize \eqref{eq:objective_fin} over $0\leq W_1 \leq 1$. The unconstrained minimizer is
$W_1^\star=\frac{s_1}{2s_2}$.
Projecting onto $[0,1]$ gives
\begin{equation}
W_1^\star
=
\begin{cases}
\displaystyle \frac{s_1}{2s_2}, & 0<s_1\leq2s_2,\\[6pt]
1, & 2s_2<s_1\leq1.
\end{cases}\label{eq:W_opt}
\end{equation}
Note that $g(W_1^\star) \ge 1$ holds for both cases. For $W_1^\star = \frac{s_1}{2s_2}$, we have $g\left(\frac{s_1}{2s_2}\right) = 2 - \frac{s_1}{2s_2} \ge 1$ precisely when $s_1 \le 2s_2$, while for $W_1^\star = 1$, $g(1) = 1$ holds trivially. This validates that $W_1^\star$ is indeed the optimal choice of $W_1$ for $\lambda = 0$.

Evaluating $f(0, W_1^\star)$ across both parameter ranges yields then
\begin{equation}
    f\left(0, \frac{s_1}{2s_2}\right) = \frac{1}{s_2} - \frac{s_1}{4s_2^2} = \frac{4s_2 - s_1}{4s_2^2} < \frac{1}{s_2} \qquad 0 < s_1 \leq s_2,
\end{equation}
and
\begin{equation}
    f(0, 1) = \frac{1}{s_1} < \frac{1}{2s_2} < \frac{1}{s_2}, \qquad 2s_2 < s_1 < 1.
\end{equation}
Thus, $f(0, W_1^\star) < f(-W_1, W_1) = \frac{1}{s_2}$ strictly holds in both ranges, confirming that the choice $(\lambda, W_1) =  (0, W_1^\star)$ globally minimizes the error trade-off \eqref{eq:objective_lambda_W1}. Substituting \eqref{eq:W_opt} into \eqref{eq:objective_fin} directly yields $R^\star (s_1,s_2)$ as defined in \eqref{eqn:R-s1-s2-star}.
\end{IEEEproof}

With Proposition~\ref{thm:rank1-two-param} established, the proof of Lemma~3 reduces to characterizing the parameter set $(s_1, s_2)$ for which $R^\star (s_1,s_2) \ge 1$. We analyze the two branches of \eqref{eqn:R-s1-s2-star} separately:

\begin{itemize}
    \item $0 < s_1 \le 2s_2$: Here, $R^\star (s_1,s_2) = \frac{4s_2 - s_1}{4s_2^2}$. The condition $R^\star (s_1,s_2) \ge 1$ is equivalent to
    \begin{equation}
        4s_2 - s_1 \ge 4s_2^2 \iff s_1 \le 4s_2(1 - s_2).
    \end{equation}
    When $0 \le s_2 \le \frac{1}{2}$, $4s_2(1 - s_2) \le 2s_2$, so the constraint $s_1 \le 4s_2(1 - s_2)$ is tighter than $s_1 \le 2s_2$ and defines the valid boundary. When $\frac{1}{2} < s_2 < 1$, we have $4s_2(1 - s_2) > 2s_2$, so $s_1 \le 4s_2(1 - s_2)$ is automatically satisfied for all $0 < s_1 \le 2s_2$.

    \item $2s_2 < s_1 \le 1$: Here, $R^\star (s_1,s_2) = \frac{1}{s_1}$. The requirement $R^\star (s_1,s_2) \ge 1 \iff s_1 \le 1$ holds trivially across the entire branch. This branch is non-empty whenever $2s_2 < 1$, i.e., $0 \le s_2 < \frac{1}{2}$.
\end{itemize}

Combining these two branches confirms that $R^\star (s_1,s_2) \ge 1$ if, and only if, $(s_1, s_2) \in \mathcal{R}$, as defined in \eqref{eq:region}. Consequently, for any $(s_1,s_2) \in \mathcal{R}$, Proposition~\ref{thm:rank1-two-param} yields
\begin{equation}
\frac{\Tr\bigl[\min\bigl\{\frac{P_{A,B}}{s_1}, \frac{Q_{A,B}}{s_2}\bigr\}\bigr]}{\Tr\bigl[\min\{A,B\}\bigr]} \geq 1,
\end{equation}
which completes the proof of Lemma~3.

\section*{Acknowledgments}
Stimulating discussions with Hao-Chung Cheng are gratefully acknowledged. 

OpenAI Codex was used to assist in developing the proof of Theorem 3. The authors verified and revised the mathematical arguments.

\bibliographystyle{IEEEtran}
\bibliography{bib/references}

\end{document}